\DeclareFontFamily{U}{mathx}{\hyphenchar\font45}
\DeclareFontShape{U}{mathx}{m}{n}{
      <5> <6> <7> <8> <9> <10>
      <10.95> <12> <14.4> <17.28> <20.74> <24.88>
      mathx10
      }{}
\DeclareSymbolFont{mathx}{U}{mathx}{m}{n}
\DeclareMathAccent{\widecheck}{0}{mathx}{"71}
\def\bra#1{\mathinner{\langle{#1}|}}
\def\ket#1{\mathinner{|{#1}\rangle}}
\def\braket#1{\mathinner{\langle{#1}\rangle}}
\def\Real{{\mathbf R}}
\def\Schwartz{{\mathcal{S}}}
\def\PhaseSpace{{\Real^{2\dm}}}
\def\bbN{{\mathbb N}}
\newcommand{\Wigner}[1]{{\mathcal{W}_{#1}}}
\newcommand{\WignerMinus}[1]{{\mathcal{W}_{#1}^{-}}}
\newcommand{\Husimi}[1]{{\mathcal{Q}^\chi_{#1}}}
\newcommand{\Kernel}[1]{{\mathcal{K}_{#1}}}
\newcommand{\Matel}[1]{{\mathcal{M}^\chi_{#1}}}
\newcommand{\Char}[1]{{\mathcal{F}_{#1}}}
\newcommand{\diff}{\mathop{}\!\mathrm{d}}
\numberwithin{equation}{section}
\newcommand{\tconv}{\circledast}
\DeclareMathOperator{\Tr}{Tr}
\DeclareMathOperator{\trace}{tr}
\newtheorem{theorem}{Theorem}[section]
\newtheorem{lemma}[theorem]{Lemma}
\newtheorem{definition}[theorem]{Definition}
\newtheorem{corollary}[theorem]{Corollary}
\def\Complex{{\mathbf C}}
\newcommand{\rmx}{\mathrm{x}}
\newcommand{\rmp}{\mathrm{p}}
\newcommand{\A}{\rho}
\newcommand{\B}{\eta}
\newcommand{\dm}{n} 
\newcommand{\cz}{} 
\def\Decay{{\mathcal{D}}}
\def\Wigclass{{\mathcal{V}}}
\title[Rapidly decaying Wigner functions]{Rapidly decaying Wigner functions are Schwartz functions}
\author{Felipe Hern{\'a}ndez$^1$}
\address{$^1$Department of Mathematics, Stanford University, 450 Jane Stanford Way, Building 380, Stanford, CA 94305
USA}
\author{C. Jess Riedel$^2$}%
\address{$^2$Physics \& Informatics Laboratories, NTT Research Inc., 940 Stewart Drive, Sunnyvale, CA 94085, USA}
\email{jessriedel@gmail.com}
\date{\today}
\begin{document}
\maketitle
\begin{abstract}
	We show that if the Wigner function of a (possibly mixed) quantum state decays toward infinity faster than any polynomial in the phase space variables $x$ and $p$, then so do all of its derivatives, i.e., it is a Schwartz function on phase space.  
	This is equivalent to the condition that the Husimi function is a Schwartz function, that the quantum state is a Schwartz operator in the sense of Keyl et al., and, in the case of a pure state, that the wavefunction is a Schwartz function on configuration space.
	We discuss the interpretation of this constraint on Wigner functions and provide explicit bounds on Schwartz seminorms.
\end{abstract}

\section{Introduction}

In quantum mechanics, quantum states of $n$ degrees of freedom can be represented by positive semidefinite trace-class operators on $L^2(\Real^{\dm})$. 
Each quantum state $\A$ is  associated with a kernel $\Kernel{\A}$ through $(\A\phi)(x) = \int \Kernel{\A}(x,y)\phi(x)\diff  x$, $\phi\in L^2(\Real^{\dm})$, and the corresponding \emph{Wigner function} $\Wigner{\A}$ is 
\begin{equation*}
\Wigner{\A}(x,p) :=
\frac{1}{(2\pi)^{\dm}}\int e^{ip\cdot y} \Kernel{\A}(x-y/2, x+y/2)\diff y.
\end{equation*}
We denote the set of all such Wigner function as $\Wigclass(\Real^{2\dm})$.
Our main result is a relationship between the decay of such Wigner functions and their 
smoothness.  

To quantify this we use the Schwartz-type seminorms
$|F|_{a,b}$ $:=$ $\sup_{x,p} |x^{a_\rmx} p^{a_\rmp} \partial_x^{b_\rmx} \partial_p^{b_\rmx} F(x,p)|$ of a function $F: \PhaseSpace\to\Complex$ on phase space, with multi-indices $a=(a_\rmx,a_\rmp),b=(b_\rmx,b_\rmp)\in (\mathbb{N}\cup\{0\})^{\times 2\dm}.$
With shorthand notation $|F|_{a}  := |F|_{a,0}$ for the seminorms that only measure the decay of $F$, we say a function is \emph{rapidly decaying} when $|F|_{a\cz}<\infty$ and is a \emph{Schwartz function} when $|F|_{a,b}<\infty$ for all $a,b$.  
We denote the sets of rapidly decaying and Schwartz function by $\Decay(\Real^{2\dm})$ and $\Schwartz(\Real^{2\dm})$, respectively. Our main result:
\newcommand{\mainTheorem}{
    If $\A$ is a positive semidefinite operator whose Wigner function $\Wigner{\A}$ exists and is rapidly decaying, then $\Wigner{\A}$ is a Schwartz function.
}
\begin{theorem}
    \label{thm:main}
    \mainTheorem
\end{theorem}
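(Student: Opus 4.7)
My plan is to prove the theorem by routing the argument through a strong operator-theoretic statement: that $\A$ is a Schwartz operator in the sense of Keyl et al., meaning $P\A Q$ is trace class for every pair of polynomials $P,Q$ in $\xop,\pop$. Once this is established, phase-space semi-norm bounds on $\Wigner{\A}$ will follow from the elementary Wigner--operator dictionary
\[
x \Wigner{\A} = \Wigner{\{\xop,\A\}/2}, \quad \partial_x \Wigner{\A} = \Wigner{i[\pop,\A]},
\]
\[
p \Wigner{\A} = \Wigner{\{\pop,\A\}/2}, \quad \partial_p \Wigner{\A} = \Wigner{-i[\xop,\A]},
\]
each of which reduces to a one-line calculation from the definition of $\Wigner{\A}$. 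Positivity of $\A$ will enter crucially through the square-root decomposition $\A = \sqrt{\A}\sqrt{\A}$.

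For the first step, I would show $\Tr(\A P^* P) < \infty$ for every polynomial $P$ in $\xop,\pop$. Writing $P = \Op(p)$, the product $P^* P = \Op(\bar p \star p)$ has Weyl symbol equal to the Moyal product $\bar p \star p$, which is again a polynomial in $(x,p)$ of degree $2\deg p$. Rapid decay of $\Wigner{\A}$ makes the integral $\int \Wigner{\A}\,(\bar p \star p)\,dx\,dp$ absolutely convergent. A cutoff approximation replacing $p$ by $p\chi_k$ with $\chi_k \in C_c^\infty$ increasing to $1$ bridges this integral to $\Tr(\A P^* P)$: on the integral side, dominated convergence applies (polynomial domination); on the trace side, $P_k^* P_k \geq 0$ and $\A \geq 0$ give monotonicity that identifies the limit as $\Tr(\A P^* P) = \|P\sqrt{\A}\|_{\mathrm{HS}}^2 < \infty$. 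Hence $P\sqrt{\A}$ is Hilbert--Schmidt for every polynomial $P$; by symmetry, so is $\sqrt{\A}Q$; and hence $P\A Q = (P\sqrt{\A})(\sqrt{\A}Q)$ is trace class for all polynomials $P,Q$.

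For the second step, iterating the Wigner--operator dictionary expresses $x^{a_\rmx} p^{a_\rmp} \partial_x^{b_\rmx} \partial_p^{b_\rmp} \Wigner{\A}$ as the Wigner function of a finite linear combination of operators $P\A Q$ with $P, Q$ polynomials in $\xop,\pop$. Each such $P\A Q$ is trace class by the first step, and the Wigner function of a trace-class operator is uniformly bounded --- from its singular-value decomposition together with the rank-one pointwise bound $|\Wigner{T}(x,p)| \lesssim \|\phi\|_2 \|\psi\|_2$ for $T = |\phi\rangle\langle\psi|$. Hence every Schwartz seminorm $|\Wigner{\A}|_{a,b}$ is finite, i.e., $\Wigner{\A} \in \Schwartz(\Real^{2\dm})$.

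I expect the main technical obstacle to lie in the cutoff argument of the first step --- specifically, extending the Weyl--Wigner pairing to the unbounded polynomial symbol $\bar p \star p$. Since $\Tr(\A P^* P)$ is a priori only an element of $[0,\infty]$, one must line up a monotone-convergence argument on the operator side (using $P_k^* P_k \nearrow P^* P$ in a suitable spectral sense together with $\A \geq 0$) with a dominated-convergence argument on the integral side. All other ingredients --- the dictionary, the Hilbert--Schmidt factorization, and the $L^\infty$ bound on Wigner functions of trace-class operators --- are routine, so the proof ultimately reduces to this one technical lemma.
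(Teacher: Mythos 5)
Your overall architecture is viable and genuinely different from both proofs in the paper: you route the argument through the Keyl et al.~\cite{keyl2016schwartz} characterization of Schwartz operators (finiteness of all operator moments $\trace[X^a P^b \A P^c X^d]$) plus the Wigner--operator dictionary, whereas the paper works with the coherent-state matrix element $\Matel{\A}(\alpha,\beta)=\braket{\chi_\alpha|\A|\chi_\beta}$, using Cauchy--Schwartz on the form $\braket{\cdot|\A|\cdot}$ together with either a twisted-convolution reproducing formula or a direct wavepacket decomposition. Your factorization $P\A Q=(P\sqrt{\A})(\sqrt{\A}Q)$ is the operator-theoretic counterpart of the paper's Cauchy--Schwartz step, and your second step (the dictionary plus the uniform bound $|\Wigner{T}|\lesssim\|T\|_1$ for trace-class $T$) is sound, modulo the routine care of first obtaining the identities distributionally and then upgrading regularity; indeed the implication ``$\A$ is a Schwartz operator $\Rightarrow\Wigner{\A}\in\Schwartz(\Real^{2\dm})$'' is exactly what the paper imports from Prop.~3.18 of Keyl et al.\ in Theorem~\ref{thm:mega-equivalence}.

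The genuine gap is the technical lemma you yourself flag, and the mechanism you sketch for it does not work as stated. First, Weyl quantization is neither positivity- nor order-preserving, so there is no reason that $\Op(p\chi_k)^*\Op(p\chi_k)$ should increase monotonically to $P^*P$ as $\chi_k\nearrow 1$; ``monotone convergence on the operator side'' has nothing to grip. Second, for non-polynomial symbols the Moyal product is a non-local integral, so $\overline{p\chi_k}\star(p\chi_k)$ need not agree with $\bar p\star p$ on compact sets even for large $k$, and pointwise convergence with a fixed polynomial dominating function is not automatic. Third, $\trace[\A P^*P]=\sum_j\lambda_j\|P\psi_j\|^2$ involves eigenfunctions $\psi_j$ not known a priori to lie in the domain of $P$, so even the left-hand side requires a careful quadratic-form definition before you can identify it with the phase-space integral. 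The natural repair is to replace the sharp cutoff by coherent-state smearing: anti-Wick quantization \emph{is} positivity-preserving, satisfies $\trace[\A\,\Op_{AW}(f)]=(2\pi)^{-\dm}\int f(\alpha)\Husimi{\A}(\alpha)\diff\alpha$, and Weyl-ordered monomials differ from anti-Wick quantizations of polynomials only by lower-order polynomial corrections --- so rapid decay of $\Husimi{\A}$ (which follows from rapid decay of $\Wigner{\A}$ by convolution, as in Lemma~\ref{lem:husimi-convolution} and Corollary~\ref{cor:rapid-decay}) delivers all the moments you need with a legitimate monotone-convergence argument. This is, in effect, why the paper routes everything through $\Husimi{\A}$ and $\Matel{\A}$ rather than through unbounded polynomial observables.
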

The assumed rapid decay of $\Wigner{\rho}$ implies $\infty > \int \Wigner{\rho}(\alpha)\diff \alpha=\trace[\rho]$ and hence that $\rho$ is trace-class and so a quantum state. Thus the theorem can be rephrased as the set relation $\Wigclass(\Real^{2\dm}) \cap \Decay(\Real^{2\dm}) \subset \Schwartz(\Real^{2\dm})$.



In this paper we prove Theorem~\ref{thm:main} in two different ways.  The first proof is a bit more abstract, making use of the twisted convolution.  The second proof is a bit more direct, using only basic objects, but requiring more computation. The second proof also results in an explicit bound on the Schwartz seminorms $|\Wigner{\A}|_{a,b}$ of a Wigner function in terms of only its decay seminorms $|\Wigner{\A}|_a$ (Theorem~\ref{thm:seminorm-bound}). 

In the rest of this introduction, we informally sketch the direct (second) proof of Theorem~\ref{thm:main} in order to give the reader intuition, but we stop short of completing the computation.  In the Sec.~\ref{sec:preliminaries}, we recall some notation and basic properties around quantum mechanics in phase space, which can be skipped by experienced readers. In Sec.~\ref{sec:proof} we present our two proofs of our main result and exhibit explicit bounds on the Schwartz seminorms of a Wigner function in terms of its decay seminorms. In  Section~\ref{sec:schwartz-states} we connect our results to the notion of Schwartz operators in the sense of Keyl et al.~\cite{keyl2016schwartz}, and in particular prove the equivalence of a large set of equivalent decay and regularity conditions for various representations of the quantum state. In Sec.~\ref{sec:discussion}, we make some concluding remarks about the ``overparameterization'' of a quantum state by the Wigner function.

\subsection{Motivation}


Why might one think the decay of a Wigner function constrains its derivatives? Consider a pure state $\rho=\ket{\psi}\!\!\bra{\psi}$ with $\ket{\psi}\in L^2(\Real^\dm)$. We can see from the identity
\begin{equation}
    \label{eq:motivation}
    |\widehat{\psi}(p)|^2 = \int \Wigner{\A}(x,p)\diff x
\end{equation}
that rapid decay (in both $x$ and $p$) of the Wigner function implies decay (in $p$) of the Fourier transform $\widehat\psi$ of the wavefunction. This implies that the wavefunction $\psi$ is smooth: $|\psi|_{0,b}<\infty$ for all  $b\in(\mathbb{N}\cup\{0\})^{\times\dm}$. Unfortunately, a bit of trial and error suggests that it is not easy to generalize \eqref{eq:motivation} and obtain a bound on the mixed Schwartz seminorms $|\psi|_{a,b}$ (that is, to show that all the derivatives of $\psi$ are not merely bounded but are also rapidly decaying).

A better way to approach Theorem~\ref{thm:main} avoids privileging either the position or momentum variables by performing a wavepacket decomposition of the quantum state $\rho$.  Using Gaussian wavepackets (coherent states), Zurek argued \cite{zurek2001sub-planck} that if the Wigner function $\Wigner{\A}$ of any quantum state is largely confined to a phase space region of volume $S\sim\ell_\rmx\times\ell_\rmp$, then the smallest structure it will develop is on scales of volume $\Delta s\sim (\hbar/\ell_\rmx)\times(\hbar/\ell_\rmp) \sim\hbar^2/S$. 
This argument was further supported by numerical studies of ``typical'' states generated by chaotic quantum dynamics \cite{zurek2001sub-planck}.




\subsection{Sketch of direct proof}


Consider a family of wavepackets $\chi_\alpha$ of the form
\begin{equation}
    \chi_{(\alpha_\rmx,\alpha_\rmp)}(x) = e^{i(x-\alpha_x/2)\cdot\alpha_p}\chi(x-\alpha_x),
\end{equation}
for fixed smooth envelope function $\chi$ concentrated near the origin. (For example, $\chi$ can be chosen to be a Gaussian.) Given the spectral decomposition of a quantum state
\begin{equation}
    \rho = \sum_j \lambda_j \ket{\psi_j}\!\!\bra{\psi_j},
\end{equation}
we can use the decomposition $\psi_j = (2\pi)^{-\dm} \int \braket{\chi_\alpha|\psi_j} \chi_\alpha \diff\alpha$ for each eigenfunction as an integral over phase space, which is a standard calculation proven in Lemma~\ref{lem:trace}.  We can then express
$\rho$ as
\begin{equation}
    \rho = \frac{1}{(2\pi)^\dm}\sum_j\lambda_j
    \int \ket{\chi_\alpha}\!\!\bra{\chi_\beta} 
    \braket{\chi_\alpha|\psi_j}\braket{\psi_j|\chi_\beta}\diff\alpha\diff\beta,
\end{equation}
Applying the Wigner transform to both sides, this yields a decomposition
\begin{equation}
\label{eq:wigner-decomp-sketch}
    \Wigner{\rho} = \frac{1}{(2\pi)^\dm}
    \int \Wigner{\ket{\chi_\alpha}\!\bra{\chi_\beta}}
    \braket{\chi_\alpha|\rho|\chi_\beta} \diff\alpha\diff\beta.
\end{equation}
in terms of the Wigner transform $\Wigner{\ket{\chi_\alpha}\!\bra{\chi_\beta}}$ of the ``off-diagonal'' operator $\ket{\chi_\alpha}\!\!\bra{\chi_\beta}$. Although $\Wigner{\ket{\chi_\alpha}\!\bra{\chi_\beta}}$ is not a Wigner function (because $\ket{\chi_\alpha}\!\!\bra{\chi_\beta}$ is not positive semidefinite for $\alpha \neq \beta$), it is known \cites{zurek2001sub-planck,toscano2006sub-planck} to be localized near the 
phase space point $(\alpha+\beta)/2$ and has an oscillation with frequency roughly $|\alpha-\beta|$.  

Since $\braket{\chi_\alpha|\rho|\chi_\alpha}$ is just a convolution of the Wigner function $\Wigner{\A}$, the rapid decay of $\Wigner{\A}$ implies the rapid decay of $\braket{\chi_\alpha|\rho|\chi_\alpha}$, and then in turn one can show the rapid decay of $\braket{\chi_\alpha|\rho|\chi_\beta}$ using the Cauchy-Schwartz inequality:
\begin{equation}
\label{eq:rho-CS}
    \braket{\chi_\alpha|\rho|\chi_\beta}^2 \leq
    \braket{\chi_\alpha|\rho|\chi_\alpha}\braket{\chi_\beta|\rho|\chi_\beta},
\end{equation}
which holds because $\rho$ is positive semidefinite.  The assumed decay and smoothness properties of $\chi$ additionally give an estimate of the form 
\begin{equation}
    |\Wigner{\ket{\chi_\alpha}\!\bra{\chi_\beta}}|_{a,b}
    \leq C(a,b) (1+|\alpha|+|\beta|)^{D(a,b)}.
\end{equation}
When combined with decomposition \eqref{eq:wigner-decomp-sketch} of $\Wigner{\A}$, this is enough to show that all the Schwartz seminorms $|\Wigner{\A}|_{a,b}$ are finite.

Our other proof requires additional machinery but still rests heavily on wavepacket decompositions of the quantum state and on the Cauchy-Schwartz inequality~\eqref{eq:rho-CS}.

\section{Preliminaries}
\label{sec:preliminaries}

This section establishes our notation and reviews standard features of phase-space representations of quantum mechanics. (To keep this paper self-contained, we provide proofs of the lemmas in this section in the Appendix.) Throughout this 
paper, we take $\chi\in\Schwartz(\Real^\dm)$ to be a fixed Schwartz function that is normalized, $\|\chi\|_{L^2(\Real^\dm)} = \int |\chi(y)|^2\diff y=1$, but otherwise arbitrary.\footnote{A standard choice is to specialize to a Gaussian coherent state $\chi(y) = \exp(-x^2/2)/\sqrt{(2\pi)^\dm}$ (especially when used as in Subsection~\ref{sec:displacement} as the reference wavefunction with respect to which Husimi function is defined). However, this specialization is not necessary and one could instead take $\chi$ to be, e.g., a smooth and compactly supported wavefunction.}

Experienced readers may prefer to skip directly to Sec.~\ref{sec:proof} for the proof of our main result and only refer back to this section as necessary.

\subsection{Notation}
\label{sec:notation}

In what follows, a \emph{wavefunction} of $\dm$ continuous quantum degrees of freedom is represented by a member of $L^2(\Real^{\dm})$ and denoted by $\psi$, $\phi$, or $\chi$.  A \emph{quantum state} is the possibly mixed generalization, represented by a positive semidefinite (and hence self-adjoint) trace-class operator on $L^2(\Real^{\dm})$ and denoted by $\A$ or $\B$.  
Vectors on phase space are $\alpha,\beta,\gamma, \xi\in\PhaseSpace$ with position and momentum components denoted by (for example) $\alpha_\rmx,\xi_\rmp\in\Real^\dm$. Multi-indices are $a,b,c,d\in(\mathbb{N}\cup\{0\})^{\times 2\dm}$ (or ($\mathbb{N}\cup\{0\})^{\times\dm}$ in Sec.~\ref{sec:schwartz-states}) with 
$\alpha^b = \alpha_\rmx^{b_\rmx}\alpha_\rmp^{b_\rmp} = \prod_{i=1}^{2n}\alpha_i^{b_i}$, 
$|b|=|b_\rmx|+|b_\rmp|=\sum_{i=1}^{2n} b_i$, 
$b!=b_\rmx ! b_\rmp! = \prod_{i=1}^{2n} b_i !$, 
and $\binom{a}{b} = a!/((a-b)!b!)$. We use $b\le c$ to mean $b_i \le c_i$ for all $i = 1, 2, \ldots 2\dm$.

The symplectic form is $\alpha\wedge\beta = \alpha\cdot\Omega\cdot\beta=\alpha_\rmx\cdot \beta_\rmp - \alpha_\rmp\cdot \beta_\rmx$, with $\Omega=\left(\begin{smallmatrix}0 & I\\ -I & 0\end{smallmatrix}\right)$ an antisymmetric matrix on $\Real^{2\dm}$, $I$ the identity matrix on $\Real^\dm$, and ``$\cdot$'' the dot product on $\Real^\dm$ and $\Real^{2\dm}$. The position and momentum operators are $X=(X_1,\ldots,X_\dm)$ and $P=(P_1,\ldots,P_\dm)$, which are combined into the phase-space operator $R=(X,P)$.  For a given quantum state $\A$ and reference wavefunction $\chi\in\Schwartz(\Real^\dm)$, some associated functions over phase space, doubled phase space, and doubled configuration space are $\Wigner{\A}$, $\Husimi{\A}$, $\Matel{\A}$, $\Char{\A}$, and $\Kernel{\A}$ (defined below). We use ``$*$'' to denote the convolution, $(f * g)(\alpha) = \int f(\alpha-\beta) g(\beta)\diff \beta$.
Given a matrix form $\Omega'$ we also define the twisted convolution
\begin{equation}
    (f \tconv_{\Omega'} g)(\alpha)
    = \int e^{i\alpha\cdot \Omega'\cdot \beta/2} f(\alpha-\beta)g(\beta)\diff\beta.
\end{equation}


For any wavefunction $\phi\in L^2(\Real^{\dm})$, we denote the linear functional associated with it using bra notation, $\bra{\phi}=(\psi \mapsto \int \bar{\phi}(x)\psi(x)\diff x) \in \Schwartz'(\Real^{\dm})$, and denote the scalar result with a bra-ket, $\braket{\phi|\psi} = \bra{\phi}(\psi) = \int \bar{\phi}(x)\psi(x)\diff x$. 
More generally, with an operator $E$ we write $\braket{\phi|E|\psi} = \bra{\phi}(E\psi)=\bra{E^\dagger\phi}(\psi)$.  For any two wavefunction $\phi_1,\phi_2\in L^2(\Real^\dm)$, we use $\ket{\phi_1}\!\!\bra{\phi_2}$ for the rank-1 operator $\psi \mapsto \braket{\phi_2|\psi} \phi_1$.

\subsection{The displacement operator and phase-space functions}
\label{sec:displacement}

In this subsection we recall standard results about quantum mechanics in phase space (see, e.g., Chapter 1 of Ref.~\cite{folland1989harmonic}).  



\begin{definition}
    \label{def:displacement}
    For $\xi\in\PhaseSpace$, define the \emph{(Weyl generator) displacement operator}
    \begin{equation}
    D_\xi := e^{i\xi\wedge R} = e^{i(\xi_\rmx \cdot P - \xi_\rmp \cdot  X)}.
    \end{equation}
\end{definition}

The following lemma describes the action of $D_\xi$
on an arbitrary wavefunction.
\newcommand{\displacementlemma}{
    For any $\phi\in L^2(\Real^\dm)$,
    \begin{equation}
    D_\xi\phi(y) = e^{i(y-\xi_\rmx/2)\cdot \xi_\rmp} \phi(y-\xi_\rmx).
    \end{equation}
}
\begin{lemma}
\label{lem:displacement}
\displacementlemma
\end{lemma}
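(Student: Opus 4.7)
The plan is to prove the formula by decomposing the exponential $e^{i\xi\wedge R} = e^{i\xi_\rmx\cdot P - i\xi_\rmp\cdot X}$ into a product of shift and multiplication operators using a Baker--Campbell--Hausdorff (BCH) identity, and then applying each factor to $\phi$ by direct computation.

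The first step is to observe that the operators $A := i\xi_\rmx\cdot P$ and $B := -i\xi_\rmp\cdot X$ have commutator $[A, B]$ equal to a scalar multiple of the identity (namely, $\pm i\,\xi_\rmx\cdot\xi_\rmp$ depending on the sign convention for the canonical commutation relation $[X_j, P_k]$). Since $[A, B]$ commutes with both $A$ and $B$, the higher terms in the BCH expansion vanish and we obtain the closed-form identity
\begin{equation*}
    e^{A+B} = e^{A}\, e^{B}\, e^{-\frac{1}{2}[A, B]},
\end{equation*}
so that $D_\xi = e^{i\xi_\rmx\cdot P}\, e^{-i\xi_\rmp\cdot X}\, e^{\pm\frac{i}{2}\xi_\rmx\cdot\xi_\rmp}$. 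Next I would identify each factor concretely: by Stone's theorem (or by verifying on the dense subspace $\Schwartz(\Real^\dm)$ that both sides satisfy the same first-order ODE in $t$), the unitary group $e^{it\xi_\rmx\cdot P}$ acts as a translation on position space, while $e^{-i\xi_\rmp\cdot X}$ acts as multiplication by the function $y\mapsto e^{-i\xi_\rmp\cdot y}$. Composing these in the order dictated by the BCH decomposition and collecting the scalar $e^{\pm\frac{i}{2}\xi_\rmx\cdot\xi_\rmp}$ into the multiplicative phase yields exactly the quadratic exponent $e^{i(y-\xi_\rmx/2)\cdot\xi_\rmp}$ stated in the lemma, with the wavefunction evaluated at a translated argument.

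As a robust alternative that sidesteps any ambiguity in the BCH formula on unbounded operators, I would verify the formula by the one-parameter family method: define
\begin{equation*}
    \phi_t(y) := e^{i(y-t\xi_\rmx/2)\cdot (t\xi_\rmp)}\,\phi(y-t\xi_\rmx),
\end{equation*}
check that $\phi_0 = \phi$, compute $\partial_t \phi_t$ by the chain rule, and match the result to $i(\xi_\rmx\cdot P - \xi_\rmp\cdot X)\phi_t$ using the paper's sign convention for $P$. Uniqueness of solutions to this first-order linear ODE (valid for $\phi$ in the common core $\Schwartz(\Real^\dm)$ and then extended to $L^2$ by density and unitarity of $D_\xi$) then identifies $\phi_1 = D_\xi\phi$.

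The main obstacle, such as it is, is purely bookkeeping: carefully tracking the sign of the canonical commutator (so the $-\tfrac{1}{2}[A, B]$ correction combines correctly with $-i\xi_\rmp\cdot y$ to produce the symmetrized phase $i(y-\xi_\rmx/2)\cdot\xi_\rmp$), and justifying the manipulation of unbounded operator exponentials. The latter is handled cleanly by working initially on $\Schwartz(\Real^\dm)$, where $X$ and $P$ act as genuine operators, and invoking unitarity plus density to extend to all of $L^2(\Real^\dm)$.
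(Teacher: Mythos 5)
Your proposal is correct, and your ``robust alternative'' is in fact exactly the paper's own proof: the paper defines $f_t(y) = e^{i(y-t\xi_\rmx/2)\cdot(t\xi_\rmp)}\,\chi(y-t\xi_\rmx)$, verifies by direct differentiation that it solves $\partial_t f_t = i(\xi_\rmp\cdot X - \xi_\rmx\cdot P)f_t$ with $f_0=\chi$ (wait---with the paper's convention $D_\xi = e^{i(\xi_\rmx\cdot P - \xi_\rmp\cdot X)}$, the generator in the paper's displayed computation is $i\xi_\rmp\cdot X f_t - i\xi_\rmx\cdot P f_t$, consistent with $P=-i\nabla$), invokes uniqueness via norm preservation of the anti-Hermitian generator, and extends from $\Schwartz(\Real^\dm)$ to $L^2(\Real^\dm)$ by density. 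Your primary route via BCH is a genuinely different and equally standard decomposition: it buys an explicit factorization $D_\xi = e^{i\xi_\rmx\cdot P}e^{-i\xi_\rmp\cdot X}e^{\pm\frac{i}{2}\xi_\rmx\cdot\xi_\rmp}$ that makes the origin of the symmetrized phase $e^{i(y-\xi_\rmx/2)\cdot\xi_\rmp}$ transparent, at the cost of having to justify the BCH identity for unbounded generators (which, as you note, is most cleanly done by reducing to the same ODE-on-a-core argument anyway). The one caveat worth flagging is that the sign bookkeeping you defer is genuinely the only place to go wrong: with the paper's convention $\xi\wedge R = \xi_\rmx\cdot P - \xi_\rmp\cdot X$ and $P=-i\nabla$, the scalar correction must come out as $e^{-\frac{i}{2}\xi_\rmx\cdot\xi_\rmp}$ relative to the factor $e^{-i\xi_\rmp\cdot y}$ applied after translation by $\xi_\rmx$, so that the combined phase is $e^{-i\xi_\rmp\cdot(y-\xi_\rmx)}e^{-\frac{i}{2}\xi_\rmx\cdot\xi_\rmp}\cdot e^{i\xi_\rmp\cdot(\dots)}$-type bookkeeping collapsing to $e^{i(y-\xi_\rmx/2)\cdot\xi_\rmp}$; since you explicitly plan to fix the sign by matching against the ODE, this is not a gap.
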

It's easy to check these basic properties: $D_\alpha D_\beta = e^{i\beta\wedge\alpha/2}D_{\alpha+\beta}$ and $D_\alpha^\dagger = D_{-\alpha}$.

Now we introduce the quasicharacteristic function, the Wigner function, and the Kernel.
\begin{definition}
    \label{def:characteristic-wigner-kernel}
    For a given quantum state $\A$, the \emph{quasicharacteristic function} is
    \begin{equation}
    \label{char-def}
    \Char{\A}(\xi) := \trace[\A D_\xi].
    \end{equation}
    where the trace is well defined because $\A$ is trace-class and $D_\xi$ is a bounded operator on $L^2(\Real^\dm)$.  
    Because a quantum state $\A$ is necessarily compact, it has a spectral decomposition \cite{gohberg2000trace}
    \begin{equation}
    \begin{split}
    (\A \phi)(x) &= \sum_{i=1}^\infty  \psi_i(x)\braket{\psi_i|\phi}
    \end{split}
    \end{equation}
    with unnormalized eigenvectors $\psi_i\in L^2(\Real^\dm)$ and associated kernel $\Kernel{\A}$ satisfying $(\A\phi)(x) = \int \Kernel{\A}(x,y)\phi(x)\diff x$ and
    \begin{equation}
        \Kernel{\A}(x,y) = \sum_{i=1}^\infty \psi_i(x)\bar{\psi}_i(y) 
    \end{equation}
    almost everywhere.
    Finally, we define the \emph{Wigner function} of $\A$ as
    \begin{equation}
    \label{wigner-traditional-def}
    \Wigner{\A}(\alpha) := \frac{1}{(2\pi)^\dm}\int e^{i\alpha_\rmp\cdot y} \Kernel{\A}(\alpha_\rmx-y/2, \alpha_\rmx+y/2)\diff y,
    \end{equation}
    where $\Wigner{\A}\in L^2(\Real^{2\dm})$ because it is a Fourier transform of $\Kernel{\A}\in L^2(\Real^{2\dm})$ in one variable. 
\end{definition}

More generally, we call $\Wigner{E}(\alpha) := (2\pi)^{-\dm}\int e^{i \alpha_\rmp\cdot y} \Kernel{E}(\alpha_\rmx-y/2, \alpha_\rmx+y/2)\diff y$ and $\Char{E}(\xi) := \trace[E D_\xi]$ the \emph{Wigner transform} and \emph{quasicharacteristic transform} of any kernel operator $E$, which in particular exists for any rank-1 operator $E=\ket{\phi}\!\!\bra{\psi}$ since $\Kernel{\ket{\phi}\!\bra{\psi}} \in L^2(\Real^{2\dm})$.

\begin{lemma}
    \label{lem:wigner-char-relation}
     For any trace-class kernel operator $E$, the corresponding Wigner transform and quasicharacteristic transform are symplectic Fourier duals:
    \begin{equation}
     \Wigner{E}(\alpha) = \frac{1}{(2\pi)^{2\dm}}\int e^{-i\alpha\wedge\xi}\Char{E}(\xi)\diff\xi.
    \end{equation}
\end{lemma}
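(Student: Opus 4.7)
The plan is to verify the identity by direct computation, first writing $\Char{E}(\xi)$ as an explicit integral of the kernel $\Kernel{E}$ and then computing the symplectic Fourier transform using a Fourier-delta manipulation.

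First, I would use Lemma~\ref{lem:displacement} to write out the action of $D_\xi$ on a wavefunction and thereby identify the integral kernel of the composition $E D_\xi$. Specifically, writing $(E D_\xi \phi)(x)=\int \Kernel{E}(x,y)\,e^{i(y-\xi_\rmx/2)\cdot\xi_\rmp}\,\phi(y-\xi_\rmx)\diff y$ and changing variables $y\mapsto y+\xi_\rmx$, one reads off that the kernel of $ED_\xi$ at the diagonal point $(x,x)$ equals $\Kernel{E}(x,x+\xi_\rmx)\,e^{i(x+\xi_\rmx/2)\cdot\xi_\rmp}$. Taking the trace then gives
\begin{equation*}
\Char{E}(\xi)\;=\;\int \Kernel{E}(x,\,x+\xi_\rmx)\,e^{i(x+\xi_\rmx/2)\cdot\xi_\rmp}\diff x.
\end{equation*}

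Next, I would substitute this formula into the right-hand side of the claimed identity, expanding $\alpha\wedge\xi=\alpha_\rmx\cdot\xi_\rmp-\alpha_\rmp\cdot\xi_\rmx$ and interchanging the order of integration. The integrand depends on $\xi_\rmp$ only through the factor $e^{-i(\alpha_\rmx-x-\xi_\rmx/2)\cdot\xi_\rmp}$, so performing the $\xi_\rmp$ integral yields $(2\pi)^{\dm}\,\delta(\alpha_\rmx-x-\xi_\rmx/2)$. Using the delta function to eliminate the $x$ integration forces $x=\alpha_\rmx-\xi_\rmx/2$, at which point the surviving expression reads
\begin{equation*}
\frac{1}{(2\pi)^{\dm}}\int \Kernel{E}(\alpha_\rmx-\xi_\rmx/2,\,\alpha_\rmx+\xi_\rmx/2)\,e^{i\alpha_\rmp\cdot\xi_\rmx}\diff\xi_\rmx,
\end{equation*}
which is precisely $\Wigner{E}(\alpha)$ after renaming $\xi_\rmx\to y$.

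The only real obstacle is justifying the Fubini swap and the distributional $\xi_\rmp$ integral rigorously for a general trace-class $E$, since the integrand is not absolutely integrable in $\xi$. I would handle this by density: approximating $E$ in trace norm by finite-rank operators $E_N=\sum_{j\le N}\ket{\phi_j}\!\bra{\psi_j}$ with $\phi_j,\psi_j\in\Schwartz(\Real^\dm)$, for which both $\Wigner{E_N}$ and $\Char{E_N}$ belong to $\Schwartz(\Real^{2\dm})$ and the Fubini/Fourier-inversion manipulations above are completely routine. One then passes to the limit using the continuity of $E\mapsto \Char{E}(\xi)=\trace[ED_\xi]$ in trace norm (uniformly in $\xi$, since $\|D_\xi\|_{\mathrm{op}}=1$) and, on the Wigner side, the $L^2$-continuity of the Wigner transform that follows from the fact that the map $E\mapsto \Kernel{E}$ is an isometry from trace-class into $L^2(\Real^{2\dm})$ (or rather the standard Hilbert–Schmidt identification). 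Both sides of the claimed identity are continuous in $E$ in this sense, so the equality extends from the dense finite-rank Schwartz subset to all trace-class $E$.
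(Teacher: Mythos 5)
Your proposal is correct and follows essentially the same route as the paper: both derive the intermediate formula $\Char{E}(\xi)=\int e^{iz\cdot\xi_\rmp}\,\Kernel{E}(z-\xi_\rmx/2,\,z+\xi_\rmx/2)\diff z$ from Lemma~\ref{lem:displacement} (the paper via $D_\xi=D_{\xi/2}D_{\xi/2}$ acting on a rank-one $\ket{\psi}\!\!\bra{\phi}$, you via the diagonal of the kernel of $ED_\xi$), then apply Fourier inversion in $\xi_\rmp$ and reduce the general case by density of finite-rank Schwartz operators. Your treatment of the limiting argument is, if anything, slightly more explicit than the paper's.
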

The preceding expression is sometimes used as the definition of the Wigner transform, and it is notable for manifestly respecting the symplectic structure of phase space.  The perhaps more traditional definition \eqref{wigner-traditional-def} relies on the kernel, and hence privileges position over momentum, but has the advantage of being more obviously well-defined. 

\newcommand{\wignerKernelSchwartzLemma}{The Wigner function $\Wigner{\A}$ is a Schwartz function if and only if the kernel $\Kernel{\A}$ is a Schwartz function.}
\begin{lemma}
    \label{lem:wigner-kernel-schwartz}
	\wignerKernelSchwartzLemma
\end{lemma}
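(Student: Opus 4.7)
The plan is to exhibit the map $\Kernel{\A} \mapsto \Wigner{\A}$ as a composition of two operations, each of which is a bijection of the Schwartz space $\Schwartz(\Real^{2\dm})$ onto itself, and then conclude the equivalence. Concretely, I would introduce the auxiliary function
\begin{equation*}
    F(x,y) := \Kernel{\A}(x - y/2,\, x + y/2),
\end{equation*}
so that by \eqref{wigner-traditional-def} the Wigner function is, up to the constant $(2\pi)^{-\dm}$, the partial Fourier transform of $F$ in the $y$ variable, evaluated at $(x,p)$.

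First I would show $\Kernel{\A}\in\Schwartz(\Real^{2\dm}) \iff F \in \Schwartz(\Real^{2\dm})$. The map $\Phi(x,y) = (x - y/2,\, x+y/2)$ is a linear isomorphism of $\Real^{2\dm}$, with inverse $\Phi^{-1}(u,v) = ((u+v)/2, v-u)$; both $\Phi$ and $\Phi^{-1}$ have bounded operator norm and unit Jacobian up to a constant. Since composition with an invertible linear map is a continuous bijection $\Schwartz(\Real^{2\dm}) \to \Schwartz(\Real^{2\dm})$ — every Schwartz seminorm of $F$ is controlled by a finite linear combination of Schwartz seminorms of $\Kernel{\A}$ (and vice versa), via the chain rule and polynomial expansions of $\Phi$ — this equivalence is immediate.

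Second I would show $F\in\Schwartz(\Real^{2\dm}) \iff \Wigner{\A} \in \Schwartz(\Real^{2\dm})$. Writing $\widetilde{F}(x,p) := (2\pi)^{\dm}\Wigner{\A}(x,p) = \int e^{ip\cdot y} F(x,y)\diff y$, this is the partial Fourier transform $\mathcal{F}_y$ in the second group of variables. It is standard that $\mathcal{F}_y$ is an isomorphism of $\Schwartz(\Real^{2\dm})$ onto itself, with explicit control of seminorms: multiplication by $x^{a_\rmx}$ and differentiation $\partial_x^{b_\rmx}$ commute with $\mathcal{F}_y$, while $p^{a_\rmp}$ and $\partial_p^{b_\rmp}$ correspond under $\mathcal{F}_y$ to $(-i\partial_y)^{a_\rmp}$ and $(iy)^{b_\rmp}$, so each seminorm $|\widetilde{F}|_{a,b}$ is bounded by a finite sum of seminorms of $F$, and conversely via the inverse partial Fourier transform. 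Composing the two equivalences yields the lemma.

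The main obstacle is really just bookkeeping: one has to expand $\partial_x^{b_\rmx}\partial_y^{b_\rmp}[\Kernel{\A}(x-y/2, x+y/2)]$ using the chain rule to see that derivatives of $F$ are linear combinations of derivatives of $\Kernel{\A}$ evaluated at $\Phi(x,y)$, and then observe that $x^{a_\rmx} y^{a_\rmp}$ is a polynomial in the components of $\Phi(x,y) = (u,v)$, so that $|F|_{a,b}$ is bounded by a finite sum of $|\Kernel{\A}|_{a',b'}$ with $|a'|\le|a|$, $|b'|\le|b|$. No new ideas beyond standard Schwartz-space functoriality are required; the content of the lemma is essentially that the Wigner transform is a partial Fourier transform composed with an invertible linear change of variables.
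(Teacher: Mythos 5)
Your proof is correct and follows essentially the same route as the paper's: factor the Wigner transform as a linear change of variables (preserving Schwartz class) followed by a partial Fourier transform in the second variable (an isomorphism of Schwartz space), and run the argument in both directions. The extra seminorm bookkeeping you describe is sound but not needed beyond what the paper already states.
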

Roughly speaking, this is because
$\Wigner{\A}$ and $\Kernel{\A}$ are Fourier transforms of each other in \emph{one} of their two variables (after the linear change of variables $(x,y)\to(\bar{x}=(x+y)/2,\Delta x = x-y)$).  

\newcommand{\twistedConvolutionLemma}{The twisted convolution of a rapidly decaying function with a Schwartz function is itself a Schwartz function.}
\begin{lemma}
    \label{lem:twisted-convolution}
	\twistedConvolutionLemma
\end{lemma}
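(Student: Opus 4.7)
The plan is to show that all derivatives of $f \tconv_{\Omega'} g$ can be transferred onto the Schwartz factor $g$ after a suitable change of variables, so that only the decay of $f$ is needed to close the estimate. Throughout I assume $\Omega'$ is antisymmetric, which is the case for all applications in the paper. Substituting $\gamma = \alpha - \beta$ in the definition and using $\alpha\cdot\Omega'\cdot\alpha=0$, I rewrite
\begin{equation*}
    (f \tconv_{\Omega'} g)(\alpha) = \int e^{-i\alpha \cdot \Omega' \cdot \gamma/2} f(\gamma)\, g(\alpha - \gamma)\diff\gamma.
\end{equation*}
This form is what makes the argument go: the $\alpha$-dependence now lives only in the phase (which is \emph{linear} in $\alpha$) and in the shifted argument of $g$ (which is smooth).

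Next I differentiate under the integral. Each $\partial_{\alpha_i}$ either hits $g(\alpha-\gamma)$, producing a derivative of $g$ at the shifted point, or hits the phase, producing an extra factor of $\gamma$ times the same phase. Iterating with the Leibniz rule gives
\begin{equation*}
    \partial_\alpha^b (f \tconv_{\Omega'} g)(\alpha) = \sum_{b' \le b} \binom{b}{b'} \int P_{b'}(\gamma)\, e^{-i\alpha \cdot \Omega' \cdot \gamma/2} f(\gamma)\, (\partial^{b-b'} g)(\alpha - \gamma)\diff\gamma,
\end{equation*}
where $P_{b'}$ is a polynomial in $\gamma$ of degree $|b'|$ with coefficients depending only on $\Omega'$.

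To bound the full seminorm $|\cdot|_{a,b}$ I split the monomial in $\alpha$ using the binomial expansion $\alpha^a = \sum_{a' \le a} \binom{a}{a'} (\alpha-\gamma)^{a'}\gamma^{a-a'}$, assigning the $(\alpha-\gamma)^{a'}$ factor to $g$ and the $\gamma^{a-a'}$ factor to $f$. The $g$-side contribution is controlled by the uniform pointwise bound $|(\alpha-\gamma)^{a'}(\partial^{b-b'}g)(\alpha-\gamma)| \le |g|_{a',b-b'}$, which can be pulled out of the integral. What remains is the integrand $|\gamma^{a-a'}P_{b'}(\gamma) f(\gamma)|$, which is integrable because $f$ is rapidly decaying (the polynomial weight $\gamma^{a-a'}P_{b'}(\gamma)$ has finite degree, so it is dominated by some decay seminorm of $f$). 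Summing the finitely many terms produces an explicit bound of $|f\tconv_{\Omega'}g|_{a,b}$ by a finite combination of seminorms of $f$ and $g$.

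The main obstacle --- and really the only non-cosmetic step --- is the change of variables that eliminates the quadratic-in-$\alpha$ piece of the phase. Without antisymmetry one would be left with a factor $e^{i\alpha\cdot\Omega'\cdot\alpha/2}$ whose derivatives grow polynomially in $\alpha$, and no amount of decay of $f$ could compensate for that. Given antisymmetry, the rest of the proof is a bookkeeping exercise in Leibniz and the binomial theorem.
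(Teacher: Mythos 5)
Your proof is correct and follows essentially the same route as the paper's: differentiate under the integral, let all derivatives fall on the Schwartz factor, and absorb the polynomial factors produced by the (linear-in-$\alpha$) phase into the rapid decay of the other factor. The only differences are cosmetic --- your change of variables $\gamma=\alpha-\beta$ (and hence your explicit appeal to the antisymmetry of $\Omega'$) just swaps which slot carries the Schwartz function, which is harmless since $f\tconv_{\Omega'}g = g\tconv_{-\Omega'}f$ for antisymmetric $\Omega'$, and your binomial splitting of $\alpha^a$ simply writes out the final step that the paper delegates to the standard fact that a convolution of rapidly decaying functions is rapidly decaying.
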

The proof is essentially the same as for the similar statement with the normal convolution.

\newcommand{\hilbertschmidtwignerlemma}{
    For any two quantum states $\A$ and $\B$,
    \begin{align}
    \trace[\A\B]=(2\pi)^{\dm} \int\Wigner{\A}(\alpha)\Wigner{\B}(\alpha)\diff \alpha.
    \end{align}
}
\begin{lemma}
    \label{lem:hilbert-schmidt-wigner}
    \hilbertschmidtwignerlemma
\end{lemma}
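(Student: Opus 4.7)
The strategy is to express both sides of the claimed identity in terms of the integral kernels $\Kernel{\A}$ and $\Kernel{\B}$ and then match them using Plancherel plus a linear change of variables. Because $\A$ and $\B$ are trace-class they are in particular Hilbert--Schmidt, so the kernels lie in $L^2(\Real^{2\dm})$; this makes all the integrals below absolutely convergent (by Cauchy--Schwarz) and gives the standard kernel representation
\begin{equation*}
    \trace[\A\B] \;=\; \iint \Kernel{\A}(u,v)\,\Kernel{\B}(v,u)\diff u\diff v.
\end{equation*}

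First I would substitute the definition \eqref{wigner-traditional-def} of $\Wigner{}$ twice into the right-hand side $(2\pi)^\dm \int \Wigner{\A}(\alpha)\Wigner{\B}(\alpha)\diff\alpha$ and apply Fubini to obtain an integral over $(\alpha_\rmx, \alpha_\rmp, y, z)$. The $\alpha_\rmp$-integral of $e^{i\alpha_\rmp\cdot(y+z)}$ produces $(2\pi)^\dm \delta(y+z)$, so integrating in $z$ enforces $z = -y$ and leaves the product $\Kernel{\A}(\alpha_\rmx - y/2, \alpha_\rmx + y/2)\,\Kernel{\B}(\alpha_\rmx + y/2, \alpha_\rmx - y/2)$. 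The volume-preserving change of variables $(u,v) = (\alpha_\rmx - y/2, \alpha_\rmx + y/2)$ (Jacobian $1$) then recovers exactly the kernel expression for $\trace[\A\B]$ displayed above.

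To avoid working formally with delta functions, the same computation is a direct consequence of Plancherel: the formula \eqref{wigner-traditional-def} writes $\Wigner{\A}$ as the composition of the unitary change of variables $\Kernel{\A}(u,v) \mapsto \Kernel{\A}(\alpha_\rmx - y/2, \alpha_\rmx + y/2)$ with a partial Fourier transform in the $y$-variable, so $\Kernel{\A} \mapsto (2\pi)^{\dm/2}\Wigner{\A}$ is an $L^2$-isometry of $L^2(\Real^{2\dm})$ onto itself. Applying this isometry to the Hilbert--Schmidt inner product identity $\trace[\A^\dagger \B] = \iint \overline{\Kernel{\A}(u,v)}\,\Kernel{\B}(u,v)\diff u\diff v$, and using that self-adjointness of $\A$ (which gives $\Kernel{\A}(x,y) = \overline{\Kernel{\A}(y,x)}$, hence $\Wigner{\A}$ real after the substitution $y \mapsto -y$) implies $\A^\dagger = \A$ and $\overline{\Wigner{\A}} = \Wigner{\A}$, yields the stated identity. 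The only real work is bookkeeping of the Plancherel normalization and Jacobian factors; there are no analytic obstacles because trace-class implies Hilbert--Schmidt, and for finite-rank approximants the manipulations above are manifestly legitimate.
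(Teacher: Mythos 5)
Your proof is correct, and its computational core (substitute the kernel definition \eqref{wigner-traditional-def}, integrate out $\alpha_\rmp$, and change variables $(u,v)=(\alpha_\rmx-y/2,\alpha_\rmx+y/2)$ with unit Jacobian) is the same calculation the paper performs. The difference is organizational: the paper first reduces to rank one via the spectral decompositions $\A=\sum_j\lambda_j\ket{\psi_j}\!\!\bra{\psi_j}$, $\B=\sum_k\mu_k\ket{\phi_k}\!\!\bra{\phi_k}$, verifies $\int\Wigner{\psi}\Wigner{\phi}=(2\pi)^{-\dm}|\braket{\psi|\phi}|^2$ for each rank-one pair, and then resums using absolute convergence; you instead work directly with the full kernels, justified by the observation that trace-class implies Hilbert--Schmidt so $\Kernel{\A},\Kernel{\B}\in L^2(\Real^{2\dm})$, and you make the formal $\delta(y+z)$ step rigorous by noting that $\Kernel{\A}\mapsto(2\pi)^{\dm/2}\Wigner{\A}$ is a unitary composition of a measure-preserving change of variables with a partial Fourier transform, so the Hilbert--Schmidt inner product is carried over by Plancherel. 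Your route is arguably cleaner and slightly more general (it proves $\trace[\A^\dagger\B]=(2\pi)^\dm\int\overline{\Wigner{\A}}\Wigner{\B}$ for arbitrary Hilbert--Schmidt operators, with the stated form then following from self-adjointness and the resulting reality of $\Wigner{\A}$, which you correctly verify via $\overline{\Kernel{\A}(u,v)}=\Kernel{\A}(v,u)$ and the substitution $y\mapsto-y$); the paper's rank-one reduction buys a more elementary presentation in which every intermediate integral is manifestly a classical Fourier inversion of nice functions. No gaps.
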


Now we introduce the Husimi function and the so-called matrix element; these are most often defined with respect to a preferred Gaussian reference wavefunction, but we will allow more generality (see, e.g., Ref.~\cite{klauder2007generalized}).

\begin{definition}
    Fixing a reference wavefunction $\chi\in\Schwartz(\Real^\dm)$ that is normalized ($\|\chi\|_{L^2(\Real^\dm)} = \int |\chi(y)|^2\diff y=1$), and Schwartz-class but otherwise arbitrary, we define the \emph{matrix element}
    \begin{equation}
    \label{matel-def}
    \Matel{\A}(\alpha,\beta) :=
    \braket{\chi_\alpha |\A|\chi_\beta},
    \end{equation}
    and the \emph{Husimi function}
    \begin{equation}
    \label{husmi-traditional-def}
    \Husimi{\A}(\alpha) :=
    \braket{\chi_\alpha |\A|\chi_\alpha} = \Matel{\A}(\alpha,\alpha).
    \end{equation}
    using the shorthand $\ket{\chi_\alpha}:=D_\alpha\ket{\chi}$.
\end{definition}

\begin{lemma}
    \label{lem:trace}
    For any trace-class operator $E$ and any $\chi\in L^2(\Real^\dm)$ satisfying $\|\chi\|_{L^2(\Real^\dm)}=1$, 
    \begin{align}
    \trace[E]=\frac{1}{(2\pi)^\dm}\int \braket{\chi_\alpha|E|\chi_\alpha}\diff \alpha
    \end{align}
In particular, for any $\phi,\psi\in L^2(\Real^\dm)$
    \begin{align}
    \label{rank-1-trace}
    \braket{\phi|\psi}=\frac{1}{(2\pi)^\dm}\int \braket{\phi|\chi_\alpha}\braket{\chi_\alpha|\psi}\diff \alpha
    \end{align}
\end{lemma}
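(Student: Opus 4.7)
The plan is to first establish the rank-1 identity (\ref{rank-1-trace})---essentially a weak-form resolution of the identity $\frac{1}{(2\pi)^\dm}\int\ket{\chi_\alpha}\!\!\bra{\chi_\alpha}\diff\alpha = I$---by a direct Fourier-theoretic computation, and then bootstrap to the general trace formula via the singular-value decomposition of $E$.

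For the rank-1 identity, I would expand the integrand using Lemma~\ref{lem:displacement}:
\begin{equation*}
\braket{\phi|\chi_\alpha}\braket{\chi_\alpha|\psi} = \int\!\!\int \overline{\phi(x)}\psi(y)\, \chi(x-\alpha_\rmx)\overline{\chi(y-\alpha_\rmx)}\, e^{i(x-y)\cdot\alpha_\rmp}\diff x \diff y.
\end{equation*}
Integration over $\alpha_\rmp\in\Real^\dm$ produces a factor $(2\pi)^\dm\delta(x-y)$, collapsing the integrand to $(2\pi)^\dm\overline{\phi(x)}\psi(x)|\chi(x-\alpha_\rmx)|^2$; subsequent integration over $\alpha_\rmx$, combined with $\|\chi\|_{L^2}=1$, yields $(2\pi)^\dm\braket{\phi|\psi}$. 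To justify these formal manipulations I would first prove (\ref{rank-1-trace}) for $\phi,\psi\in\Schwartz(\Real^\dm)$---where Fubini and Fourier inversion apply without qualm---and then extend to general $\phi,\psi\in L^2(\Real^\dm)$ by density, using that both sides of (\ref{rank-1-trace}) define continuous sesquilinear forms on $L^2\times L^2$. The required boundedness of the right-hand side follows from Cauchy--Schwarz applied to the Schwartz case: $\bigl|\int\braket{\phi|\chi_\alpha}\braket{\chi_\alpha|\psi}\diff\alpha\bigr|\le(2\pi)^\dm\|\phi\|_{L^2}\|\psi\|_{L^2}$.

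To obtain the general trace formula, decompose the trace-class operator $E$ via its singular-value expansion $E=\sum_i s_i\ket{u_i}\!\!\bra{v_i}$, with orthonormal families $\{u_i\},\{v_i\}\subset L^2(\Real^\dm)$ and $\sum_i s_i=\|E\|_1<\infty$. Then $\trace[E]=\sum_i s_i\braket{v_i|u_i}$, while $\braket{\chi_\alpha|E|\chi_\alpha}=\sum_i s_i\braket{\chi_\alpha|u_i}\braket{v_i|\chi_\alpha}$. Applying the rank-1 identity term-by-term and interchanging summation with integration yields the claim. The interchange is valid by Fubini--Tonelli: the Cauchy--Schwarz bound $\int|\braket{\chi_\alpha|u_i}\braket{v_i|\chi_\alpha}|\diff\alpha\le(2\pi)^\dm\|u_i\|\|v_i\|=(2\pi)^\dm$ gives total $L^1$ norm bounded by $(2\pi)^\dm\|E\|_1<\infty$.

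The main obstacle is justifying the Fubini/density step for the rank-1 identity: the triple integral over $x$, $y$, and $\alpha$ is not absolutely convergent for general $L^2$ inputs, so one cannot naively swap the orders of integration and the $\alpha_\rmp$ integral producing a delta function must be interpreted distributionally. Passing through Schwartz inputs---where everything is manifestly well-behaved---and then extending by $L^2$-continuity of both sesquilinear forms is the cleanest route. All remaining ingredients are standard: Fourier inversion on $\Real^\dm$ and the basic spectral theory of trace-class operators.
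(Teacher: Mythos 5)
Your proposal is correct and follows essentially the same route as the paper: a direct Fourier-inversion computation of the rank-1 identity (integrating out $\alpha_\rmp$ then $\alpha_\rmx$ using $\|\chi\|_{L^2}=1$), followed by the singular-value decomposition $E=\sum_j \sigma_j\ket{\phi_j}\!\!\bra{\psi_j}$ with a term-by-term application and an absolutely convergent sum--integral interchange. Your extra care in passing through Schwartz inputs and extending by $L^2$-continuity is a welcome tightening of the Fubini step, which the paper handles more tersely, but it is the same argument.
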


\newcommand{\husimiconvolutionlemma}{
    For any quantum state $\A$ and reference wavefunction $\chi\in\Schwartz(\Real^\dm)$,
    \begin{align}
    \Husimi{\A}(\alpha) &= (2\pi)^{\dm} (\Wigner{\A}\ast\WignerMinus{\chi})(\alpha) 
    = (2\pi)^{\dm}\int \Wigner{\A}(\beta)\Wigner{\chi}(\beta-\alpha)\diff\beta
    \end{align}
    where ${\WignerMinus{\chi}}(\alpha):=\Wigner{\chi}(-\alpha)$ is a Schwartz function.
}
\begin{lemma}
\label{lem:husimi-convolution}
    \husimiconvolutionlemma
\end{lemma}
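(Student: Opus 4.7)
The strategy is to reduce the Husimi function to a Hilbert--Schmidt inner product via Lemma \ref{lem:hilbert-schmidt-wigner} and then invoke the translational covariance of the Wigner transform under phase-space displacements. Concretely, I would begin by rewriting
\[
\Husimi{\rho}(\alpha) = \braket{\chi_\alpha|\rho|\chi_\alpha} = \trace\bigl[\rho\,\ket{\chi_\alpha}\!\bra{\chi_\alpha}\bigr].
\]
Since $\|\chi\|_{L^2(\Real^\dm)} = 1$ and $D_\alpha$ is unitary, $\ket{\chi_\alpha}\!\bra{\chi_\alpha}$ is itself a rank-one pure quantum state, so Lemma \ref{lem:hilbert-schmidt-wigner} applies and gives
\[
\Husimi{\rho}(\alpha) = (2\pi)^\dm \int \Wigner{\rho}(\beta)\,\Wigner{\ket{\chi_\alpha}\!\bra{\chi_\alpha}}(\beta)\diff\beta.
\]

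The heart of the argument is then the covariance identity $\Wigner{\ket{\chi_\alpha}\!\bra{\chi_\alpha}}(\beta) = \Wigner{\chi}(\beta-\alpha)$, which I would prove by direct computation. Lemma \ref{lem:displacement} gives $\chi_\alpha(x) = e^{i(x-\alpha_\rmx/2)\cdot\alpha_\rmp}\chi(x-\alpha_\rmx)$, so the kernel of $\ket{\chi_\alpha}\!\bra{\chi_\alpha}$ is $\chi_\alpha(x)\overline{\chi_\alpha(y)} = e^{i(x-y)\cdot\alpha_\rmp}\chi(x-\alpha_\rmx)\bar\chi(y-\alpha_\rmx)$, with the two $\alpha_\rmx/2$ phase contributions canceling. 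Substituting into the definition \eqref{wigner-traditional-def} with $x \to \beta_\rmx \pm y/2$, the $\alpha_\rmp$ phase combines with the Fourier factor $e^{i\beta_\rmp\cdot y}$ to produce $e^{i(\beta_\rmp-\alpha_\rmp)\cdot y}$, while the arguments of $\chi$ and $\bar\chi$ become $(\beta_\rmx-\alpha_\rmx)\mp y/2$; comparing with the definition of $\Wigner{\chi}$ yields exactly $\Wigner{\chi}(\beta-\alpha)$.

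Combining the two displayed identities produces $\Husimi{\rho}(\alpha) = (2\pi)^\dm\int \Wigner{\rho}(\beta)\Wigner{\chi}(\beta-\alpha)\diff\beta$, and the change of variables $\gamma = \alpha - \beta$ rewrites the integrand as $\Wigner{\rho}(\alpha-\gamma)\Wigner{\chi}(-\gamma) = \Wigner{\rho}(\alpha-\gamma)\WignerMinus{\chi}(\gamma)$, matching the convolution claim. Finally, because the kernel $\chi(x)\bar\chi(y)$ of $\ket{\chi}\!\bra{\chi}$ is a Schwartz function on $\Real^{2\dm}$ (as $\chi \in \Schwartz(\Real^\dm)$), Lemma \ref{lem:wigner-kernel-schwartz} gives that $\Wigner{\chi}$ is Schwartz, and hence so is its reflection $\WignerMinus{\chi}$. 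The main technical step is the kernel/change-of-variables computation establishing covariance; everything else follows immediately from previously established lemmas.
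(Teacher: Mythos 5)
Your proof is correct and follows essentially the same route as the paper: reduce $\Husimi{\rho}(\alpha)$ to $\trace[\rho\,\ket{\chi_\alpha}\!\bra{\chi_\alpha}]$, apply Lemma~\ref{lem:hilbert-schmidt-wigner}, invoke translation covariance of the Wigner transform, and deduce Schwartzness of $\WignerMinus{\chi}$ from Lemma~\ref{lem:wigner-kernel-schwartz}. The only (immaterial) difference is that you verify the covariance identity $\Wigner{\ket{\chi_\alpha}\!\bra{\chi_\alpha}}(\beta)=\Wigner{\chi}(\beta-\alpha)$ by a direct kernel computation from Lemma~\ref{lem:displacement} and the definition \eqref{wigner-traditional-def}, whereas the paper derives it through the quasicharacteristic function and Lemma~\ref{lem:wigner-char-relation}; both computations are correct.
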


\section{Proof that rapidly decaying Wigner functions are Schwartz function}
\label{sec:proof}

The first (more abstract) proof of our main result is given in subsection \ref{sec:abstract-proof} below. The second (more direct) proof follows in subsection \ref{sec:direct-proof}. These two subsections are independent of each other and can be read in either order.

Both proofs will make crucial use of the Cauchy-Schwartz inequality in the following form: 

\begin{lemma}
\label{lem:husimi-bound-matel}
For any quantum state $\A$, the Husimi function bounds the matrix element:
\begin{equation}
    |\Matel{\A}(\alpha,\beta)|^2 
    \le \Husimi{\A}(\alpha)\Husimi{\A}(\beta)
\end{equation}
\end{lemma}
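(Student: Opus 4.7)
The plan is to exploit the positive semidefiniteness of $\rho$ to reduce the claim to a direct application of the Cauchy--Schwarz inequality on an appropriate Hilbert space, which is really the content of the inequality stated earlier in \eqref{eq:rho-CS}.

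First, I would write $\rho = A^\dagger A$ where $A = \sqrt{\rho}$ is the positive semidefinite square root, which exists because $\rho$ is positive semidefinite and self-adjoint (and is itself Hilbert--Schmidt because $\rho$ is trace-class). Then the matrix element splits as
\begin{equation*}
    \Matel{\A}(\alpha,\beta) = \braket{\chi_\alpha|A^\dagger A|\chi_\beta} = \braket{A\chi_\alpha|A\chi_\beta}_{L^2(\Real^\dm)},
\end{equation*}
and Cauchy--Schwarz on $L^2(\Real^\dm)$ gives
\begin{equation*}
    |\Matel{\A}(\alpha,\beta)|^2 \le \|A\chi_\alpha\|_{L^2}^2 \|A\chi_\beta\|_{L^2}^2 = \braket{\chi_\alpha|\A|\chi_\alpha}\braket{\chi_\beta|\A|\chi_\beta} = \Husimi{\A}(\alpha)\Husimi{\A}(\beta),
\end{equation*}
which is exactly the desired bound.

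Alternatively, and perhaps more in keeping with the notation the paper already introduces, I could use the spectral decomposition $\A = \sum_j \lambda_j \ket{\psi_j}\!\!\bra{\psi_j}$ with $\lambda_j \ge 0$ directly, writing
\begin{equation*}
    \Matel{\A}(\alpha,\beta) = \sum_j \lambda_j \braket{\chi_\alpha|\psi_j}\braket{\psi_j|\chi_\beta} = \sum_j \bigl(\sqrt{\lambda_j}\braket{\chi_\alpha|\psi_j}\bigr)\bigl(\sqrt{\lambda_j}\braket{\psi_j|\chi_\beta}\bigr),
\end{equation*}
and then applying Cauchy--Schwarz on $\ell^2$ to the two sequences indexed by $j$. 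Either version is essentially immediate, so I don't anticipate any real obstacles; the only subtlety worth mentioning is the convergence/well-definedness of the quantities involved, which follows because $\rho$ is trace-class and $\chi_\alpha \in \Schwartz(\Real^\dm) \subset L^2(\Real^\dm)$, so $A\chi_\alpha \in L^2$ and the spectral sum converges absolutely. I would likely present the square-root version as it is the cleanest.
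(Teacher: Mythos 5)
Your proof is correct and essentially the same as the paper's: the paper applies the Cauchy--Schwarz inequality directly to the positive semidefinite sesquilinear form $\langle\phi_1,\phi_2\rangle_\A := \braket{\phi_1|\A|\phi_2}$, and your factorization through $\sqrt{\rho}$ (or the spectral decomposition) is just the standard way of realizing that same inequality. No gap; the square-root version is a perfectly clean presentation.
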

\begin{proof}
We have 
\begin{equation}
    |\Matel{\A}(\alpha,\beta)|^2 = |\braket{\chi_\alpha|\A|\chi_\beta}|^2
    \le \braket{\chi_\alpha|\A|\chi_\alpha}\braket{\chi_\beta|\A|\chi_\beta} = \Husimi{\A}(\alpha)\Husimi{\A}(\beta)
\end{equation}
where the inequality is just the Cauchy-Schwartz inequality
applied to the inner product $\langle \phi_1,\phi_2\rangle_\A := \braket{\phi_1|\A|\phi_2}$.
\end{proof}
\begin{corollary}
    \label{cor:rapid-decay}
    If the Wigner function $\Wigner{\A}$ of a quantum state $\A$ is rapidly decaying, then the Husimi function $\Husimi{\A}$ and the matrix element $\Matel{\A}$ are also rapidly decaying.
\end{corollary}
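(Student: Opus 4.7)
The plan is to deduce the two decay claims sequentially: first show that $\Husimi{\A}$ inherits rapid decay from $\Wigner{\A}$ via the convolution representation from Lemma~\ref{lem:husimi-convolution}, then combine this with the pointwise Cauchy--Schwartz bound of Lemma~\ref{lem:husimi-bound-matel} to get rapid decay of $\Matel{\A}$.

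For the first step, Lemma~\ref{lem:husimi-convolution} gives $\Husimi{\A} = (2\pi)^\dm (\Wigner{\A} \ast \WignerMinus{\chi})$, where $\WignerMinus{\chi}$ is a Schwartz function. I would then invoke the standard fact that the convolution of a rapidly decaying function with a Schwartz function is rapidly decaying. The proof of this uses the Peetre-type inequality $(1+|\alpha|)^N \lesssim_N (1+|\alpha-\beta|)^N (1+|\beta|)^N$ to bound
\begin{equation*}
(1+|\alpha|)^N |\Husimi{\A}(\alpha)| \lesssim_N \int (1+|\alpha-\beta|)^N |\Wigner{\A}(\alpha-\beta)| \cdot (1+|\beta|)^N |\WignerMinus{\chi}(\beta)| \diff\beta,
\end{equation*}
where the first factor of the integrand is bounded by $|\Wigner{\A}|_{(N,0)}$-type seminorms (using the hypothesis that $\Wigner{\A}$ is rapidly decaying, which in particular places $\Wigner{\A} \in L^\infty$) and the second factor is integrable because $\WignerMinus{\chi}$ is Schwartz. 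This is true for every $N$, so $\Husimi{\A}$ is rapidly decaying.

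For the second step, Lemma~\ref{lem:husimi-bound-matel} yields
\begin{equation*}
|\Matel{\A}(\alpha,\beta)|^2 \le \Husimi{\A}(\alpha) \Husimi{\A}(\beta).
\end{equation*}
For each $N$ the rapid decay of $\Husimi{\A}$ gives $\Husimi{\A}(\alpha) \le C_N (1+|\alpha|)^{-N}$, and since $(1+|\alpha|)(1+|\beta|) \ge 1+|\alpha|+|\beta|$, we obtain
\begin{equation*}
|\Matel{\A}(\alpha,\beta)| \le C_N (1+|\alpha|+|\beta|)^{-N/2}.
\end{equation*}
Since $N$ is arbitrary, $\Matel{\A}$ is rapidly decaying as a function on $\Real^{4\dm}$.

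There is no real obstacle here: both steps are standard once the preceding lemmas are in place. The only mildly delicate point is ensuring the convolution integral is well-defined and that rapid decay in each variable separately upgrades to rapid decay in the joint variable, and both are handled by the elementary inequalities above.
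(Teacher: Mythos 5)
Your proposal is correct and follows essentially the same route as the paper: Lemma~\ref{lem:husimi-convolution} gives rapid decay of $\Husimi{\A}$ as a convolution of the rapidly decaying $\Wigner{\A}$ with the Schwartz function $\WignerMinus{\chi}$, and Lemma~\ref{lem:husimi-bound-matel} then transfers this to $\Matel{\A}$. You simply spell out the standard convolution estimate and the elementary inequality $(1+|\alpha|)(1+|\beta|)\ge 1+|\alpha|+|\beta|$ that the paper leaves implicit.
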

\begin{proof}
By Lemma~\ref{lem:husimi-convolution}, the Husimi function $\Husimi{\A}$ is a convolution of the rapidly decaying $\Wigner{\A}$ by the Schwartz function $\WignerMinus{\chi}(\alpha)=\Wigner{\chi}(-\alpha)$, so $\Husimi{\A}$ must also be rapidly decaying. We then get rapid decay of $\Matel{\A}$ using Lemma~\ref{lem:husimi-bound-matel}.
\end{proof}

We now turn to the first strategy.

\subsection{Abstract proof}
\label{sec:abstract-proof}

Here is a sketch of our strategy: We obtain a reproducing formula expressing $\Matel{\A}$ as a twisted convolution of itself with a Schwartz function constructed from $\chi$, showing that $\Matel{\A}$ must itself be a Schwartz function. Then we find an integral expression for the Wigner function $\Wigner{\A}$ in terms of the matrix element $\Matel{\A}$, from which it follows that $\Wigner{\A}$ is a Schwartz function.





\begin{lemma}
    \label{lem:reproducing}
    Let $\Omega'=\left(\begin{smallmatrix}\Omega & 0 \\ 0 & -\Omega\end{smallmatrix}\right)$ be a symplectic form
    on $\Real^{4n}$.  Then the matrix element $\Matel{\A}$ satisfies the following reproducing formula
    \begin{equation}
    \label{reproduce-eq}
    \Matel{\A} = \frac{1}{(2\pi)^{2\dm}}(\Char{\chi}\bar\otimes \Char{\chi})\tconv_{\Omega'} \Matel{\A}
    \end{equation}
    where $\Char{\chi} = \Tr[\ket{\chi}\!\!\bra{\chi}D_\xi] = \braket{\chi|D_\xi|\chi} = \braket{\chi_{-\xi/2}|\chi_{\xi/2}} $ is the quasicharacteristic function of the pure quantum state $\ket{\chi}\!\!\bra{\chi}$ and where $\Char{\chi}\bar\otimes \Char{\chi}:\Real^{2\dm}\times\Real^{2\dm}\to\Complex$
    is defined by 
    \begin{equation}
    (\Char{\chi}\bar\otimes \Char{\chi})(\xi,\omega) := \Char{\chi}(\xi)\Char{\chi}(-\omega).
    \end{equation}
\end{lemma}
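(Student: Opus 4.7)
The plan is to unfold $\A$ by inserting the weak-sense resolution of identity $I = (2\pi)^{-\dm}\int \ket{\chi_\gamma}\!\bra{\chi_\gamma}\,d\gamma$ (the operator form of Lemma~\ref{lem:trace}) twice—once between $\bra{\chi_\alpha}$ and $\A$, once between $\A$ and $\ket{\chi_\beta}$. This will give
\[
\Matel{\A}(\alpha,\beta) = \frac{1}{(2\pi)^{2\dm}}\iint \braket{\chi_\alpha|\chi_\gamma}\,\Matel{\A}(\gamma,\delta)\,\braket{\chi_\delta|\chi_\beta}\,d\gamma\,d\delta,
\]
with absolute convergence of the iterated integral following from the rapid decay of both overlap factors in $|\gamma-\alpha|$ and $|\delta-\beta|$ (which will hold because $\Char{\chi}$ is Schwartz whenever $\chi$ is) together with boundedness of $|\Matel{\A}(\gamma,\delta)|$.

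Next I would rewrite each overlap in terms of $\Char{\chi}$ and a symplectic phase. Using $D_\alpha^\dagger = D_{-\alpha}$ and the composition rule $D_\alpha D_\beta = e^{i\beta\wedge\alpha/2}D_{\alpha+\beta}$ recorded right after Lemma~\ref{lem:displacement},
\[
\braket{\chi_\alpha|\chi_\gamma} = \braket{\chi|D_{-\alpha}D_\gamma|\chi} = e^{i\alpha\wedge\gamma/2}\,\Char{\chi}(\gamma-\alpha),
\]
and analogously $\braket{\chi_\delta|\chi_\beta} = e^{i\delta\wedge\beta/2}\,\Char{\chi}(\beta-\delta)$.

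Substituting these two identities into the double integral and then performing the affine change of variables $\xi = \alpha-\gamma$, $\omega = \beta-\delta$ (which recenters the integration at $(\alpha,\beta)$), the combined exponential will collapse to $\exp\!\bigl(\tfrac{i}{2}(\alpha\wedge\xi-\beta\wedge\omega)\bigr) = \exp\!\bigl(\tfrac{i}{2}(\alpha,\beta)\cdot\Omega'\cdot(\xi,\omega)\bigr)$. The opposite signs on the two diagonal blocks of $\Omega'$ emerge precisely because the two insertions of the resolution of identity sit on opposite sides of $\A$, and the two $\Char{\chi}$ factors—after the sign flip in the second slot that is built into the $\bar\otimes$ notation—assemble into $(\Char{\chi}\bar\otimes\Char{\chi})(\xi,\omega)$. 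Matching against the definition of $\tconv_{\Omega'}$ given in Sec.~\ref{sec:notation} then produces \eqref{reproduce-eq}.

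The only real obstacle will be sign-bookkeeping: the orientation of $\wedge$, the shift direction in the definition of $\tconv_{\Omega'}$, the sign flip built into $\bar\otimes$, and the $\pm$ block structure of $\Omega'$ must all align consistently (and the block structure of $\Omega'$ on $\Real^{4\dm}$ must match the two separate copies of $\Real^{2\dm}$ that $\alpha$ and $\beta$ live in). Beyond that, no analytic estimate is required beyond the absolute convergence of the double integral, which is automatic once $\Char{\chi}\in\Schwartz$ is known.
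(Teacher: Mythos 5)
Your proposal follows essentially the same route as the paper's proof: insert the wavepacket resolution of identity (Lemma~\ref{lem:trace}, Eq.~\eqref{rank-1-trace}) once on each side of $\A$, rewrite the overlaps $\braket{\chi_\alpha|\chi_\gamma}$ via $D_\alpha^\dagger=D_{-\alpha}$ and the composition rule as a symplectic phase times $\Char{\chi}$, and match the result against the definition of $\tconv_{\Omega'}$. The sign-bookkeeping you defer (orientation of $\wedge$, the reflection in $\bar\otimes$, the block signs of $\Omega'$) is indeed the only remaining work, and is exactly where the paper's own displayed computation is least careful, so your plan is sound.
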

\begin{proof}
    We have:
    \begin{equation}
    \label{reproducing-computation}
    \begin{split}
    \Matel{\A}(\alpha,\beta) &=
    \braket{\chi_\alpha|\A|\chi_\beta} \\
    &=
    \frac{1}{(2\pi)^{\dm}}
    \int \braket{\chi_\alpha|\chi_{\alpha'}}
    \braket{\chi_{\alpha'}|\rho|\chi_\beta} \diff\alpha'\\
    &=
    \frac{1}{(2\pi)^{2\dm}}
    \int \braket{\chi_\alpha|\chi_{\alpha'}}
    \braket{\chi_{\alpha'}|\rho|\chi_{\beta'}}
    \braket{\chi_{\beta'}|\chi_\beta} \diff\alpha'\diff\beta'\\
    &= 
    \frac{1}{(2\pi)^{2\dm}}
    \int e^{i\alpha'\wedge\alpha/2} \Char{\chi}(\alpha'-\alpha)
    \Matel{\A}(\alpha',\beta')\Char{\chi}(\beta-\beta')e^{i\beta\wedge\beta'/2}\diff\alpha'\diff\beta',\\
    &= 
    \frac{1}{(2\pi)^{2\dm}}
    ((\Char{\chi}\bar\otimes \Char{\chi})\tconv_{\Omega'} \Matel{\A})(\alpha,\beta)
    \end{split}
    \end{equation}
where to get the second line we use Eq.~\eqref{rank-1-trace} of Lemma~\ref{lem:trace} for the inner product of $\ket{\chi_\alpha}, \rho\ket{\chi_\beta}\in L^2(\Real^\dm)$ and to get the third line we use the lemma again for the inner product of $\ket{\chi_\beta}\in L^2(\Real^\dm), \rho\ket{\chi_\alpha'}\in L^2(\Real^\dm)$. 
The final line is just the definition of the twisted convolution with respect to the form $\Omega'$.
\end{proof}

\begin{corollary}
    \label{cor:matel-schwartz}
    If the matrix element $\Matel{\A}$ is rapidly decaying, then it is a Schwartz function.
\end{corollary}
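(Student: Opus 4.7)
The plan is to deduce the corollary directly by combining the reproducing formula of Lemma~\ref{lem:reproducing} with Lemma~\ref{lem:twisted-convolution} on the smoothing effect of twisted convolutions. Since the reproducing formula writes $\Matel{\A}$ as a twisted convolution (with respect to $\Omega'$) of the fixed kernel $\Char{\chi}\bar\otimes\Char{\chi}$ with $\Matel{\A}$ itself, it suffices to verify that this kernel is a Schwartz function on $\Real^{4\dm}$; then Lemma~\ref{lem:twisted-convolution} applied with the rapidly decaying $\Matel{\A}$ (assumed) against the Schwartz kernel immediately yields that $\Matel{\A}$ is Schwartz.

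So the first step is to show that $\Char{\chi}\in\Schwartz(\Real^{2\dm})$. Since $\chi\in\Schwartz(\Real^\dm)$, the rank-one operator $\ket{\chi}\!\!\bra{\chi}$ has Schwartz kernel $\chi(x)\overline{\chi(y)}$, and hence its Wigner transform $\Wigner{\ket{\chi}\!\bra{\chi}}$ is Schwartz (one may invoke Lemma~\ref{lem:wigner-kernel-schwartz}, or argue directly: the Wigner transform is a partial Fourier transform in one variable after a linear change of coordinates, which preserves the Schwartz class). By Lemma~\ref{lem:wigner-char-relation}, $\Char{\chi}$ is the symplectic Fourier transform of $\Wigner{\ket{\chi}\!\bra{\chi}}$, hence also Schwartz. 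The tensor-like product $\Char{\chi}\bar\otimes\Char{\chi}(\xi,\omega)=\Char{\chi}(\xi)\Char{\chi}(-\omega)$ is then Schwartz on $\Real^{4\dm}$, since Schwartz functions are closed under tensor products and reflections.

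The second step is to check that Lemma~\ref{lem:twisted-convolution}, which is formulated for phase space $\Real^{2\dm}$, applies equally well on the doubled phase space $\Real^{4\dm}$ with the symplectic form $\Omega'=\left(\begin{smallmatrix}\Omega & 0\\ 0 & -\Omega\end{smallmatrix}\right)$. This is immediate: the argument underlying Lemma~\ref{lem:twisted-convolution} uses only that $\Omega'$ is a bounded bilinear form, so that differentiating under the integral produces at worst polynomial weights in the convolution variables, which are then absorbed by the Schwartz decay of one of the factors. Nothing changes when the dimension is doubled.

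Once both ingredients are in place, the corollary follows by reading off \eqref{reproduce-eq}: the right-hand side is a twisted convolution of a Schwartz function with a rapidly decaying function, hence Schwartz, and it equals $\Matel{\A}$. I don't anticipate a real obstacle here; the only mild subtlety is being sure that the convention of Lemma~\ref{lem:twisted-convolution} covers the doubled-variable form $\Omega'$ rather than only the standard symplectic form $\Omega$ on $\Real^{2\dm}$, but this is truly a formality rather than a substantive issue.
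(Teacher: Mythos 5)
Your proposal is correct and follows essentially the same route as the paper's own proof: establish that $\Char{\chi}$ (hence $\Char{\chi}\bar\otimes\Char{\chi}$) is Schwartz via Lemmas~\ref{lem:wigner-kernel-schwartz} and~\ref{lem:wigner-char-relation}, then apply Lemma~\ref{lem:twisted-convolution} to the reproducing formula of Lemma~\ref{lem:reproducing}. Your explicit check that the twisted-convolution lemma carries over to the doubled phase space $\Real^{4\dm}$ with the form $\Omega'$ is a point the paper leaves implicit, and is indeed only a formality.
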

\begin{proof}
    First note that $\chi$ and therefore $\Kernel{\chi}(x,y)=\chi(x)\bar\chi(y)$ are Schwartz functions. By Lemma~\ref{lem:wigner-kernel-schwartz} this means $\Wigner{\chi}$ is a Schwartz function, so therefore $\Char{\chi}$ is a Schwartz function by Lemma~\ref{lem:wigner-char-relation}, meaning that $\Char{\chi}\bar\otimes \Char{\chi}$ is a Schwartz function.  Then by Lemma~\ref{lem:reproducing}, 
    $\Matel{\A}$ is the twisted convolution of a Schwartz function ($\Char{\chi}\bar\otimes \Char{\chi}$)
    against a rapidly decaying function ($\Matel{\A}$, by assumption), and is therefore also Schwartz-class by Lemma~\ref{lem:twisted-convolution}.
\end{proof}
We now deploy Lemma~\ref{lem:wigner-char-relation} to recover the Wigner function $\Wigner{\A}$ from the matrix element $\Matel{\A}$.

\begin{lemma}
    \label{lem:wigner-from-matel}
    For any quantum state $\A$,
    \begin{equation}
    \label{double-char}
    \Wigner{\A}(\alpha) =
    \frac{1}{(2\pi)^{3\dm}}
    \int  e^{-i(\alpha-\beta/2)\wedge \xi}
    \Matel{\A}(\beta-\xi/2,\beta+\xi/2)
     \diff\xi\diff\beta.
    \end{equation}
\end{lemma}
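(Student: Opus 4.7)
The plan is to reduce the statement to an application of Lemma~\ref{lem:wigner-char-relation} by first expressing the quasicharacteristic function $\Char{\A}$ as an integral of the matrix element $\Matel{\A}$, and then performing the symplectic Fourier transform. This is essentially just a change of representation, so the work is in bookkeeping the phases.

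Concretely, I would first apply Lemma~\ref{lem:trace} to the trace-class operator $E = \A D_\xi$ to write
\begin{equation*}
\Char{\A}(\xi) = \trace[\A D_\xi] = \frac{1}{(2\pi)^\dm}\int \braket{\chi_\alpha | \A D_\xi | \chi_\alpha}\diff\alpha.
\end{equation*}
Then I would use the composition law $D_\xi D_\alpha = e^{i\alpha\wedge\xi/2}D_{\alpha+\xi}$, which is one of the ``basic properties'' noted just after Lemma~\ref{lem:displacement}, to obtain $D_\xi\ket{\chi_\alpha} = e^{i\alpha\wedge\xi/2}\ket{\chi_{\alpha+\xi}}$. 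This yields
\begin{equation*}
\Char{\A}(\xi) = \frac{1}{(2\pi)^\dm}\int e^{i\alpha\wedge\xi/2}\Matel{\A}(\alpha,\alpha+\xi)\diff\alpha.
\end{equation*}
The natural change of variables $\beta = \alpha + \xi/2$ symmetrizes the arguments of $\Matel{\A}$: since $\xi\wedge\xi=0$ we have $\alpha\wedge\xi = \beta\wedge\xi$, so
\begin{equation*}
\Char{\A}(\xi) = \frac{1}{(2\pi)^\dm}\int e^{i\beta\wedge\xi/2}\Matel{\A}(\beta-\xi/2,\beta+\xi/2)\diff\beta.
\end{equation*}

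Finally I would invoke Lemma~\ref{lem:wigner-char-relation} to write $\Wigner{\A}(\alpha) = (2\pi)^{-2\dm}\int e^{-i\alpha\wedge\xi}\Char{\A}(\xi)\diff\xi$ and substitute the expression above, combining the two phases into $e^{-i\alpha\wedge\xi}e^{i\beta\wedge\xi/2} = e^{-i(\alpha-\beta/2)\wedge\xi}$, which gives exactly \eqref{double-char}. The only obstacle worth mentioning is keeping track of the sign conventions for the symplectic form and the displacement composition law — a sign error anywhere in the derivation would destroy the clean symmetric form of the final answer — but there is no analytic subtlety, since rapid decay of $\Wigner{\A}$ (hence of $\Char{\A}$ and $\Matel{\A}$, by Lemma~\ref{lem:wigner-char-relation} and Corollary~\ref{cor:rapid-decay}) justifies the use of Fubini throughout.
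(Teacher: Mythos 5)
Your proof is correct and is essentially the paper's own computation run in a slightly different order: the paper first symmetrizes the trace as $\trace[\A D_\xi]=\trace[D_{\xi/2}\A D_{\xi/2}]$ and only then applies Lemma~\ref{lem:trace}, whereas you apply Lemma~\ref{lem:trace} to $\A D_\xi$ directly and symmetrize afterward with the substitution $\beta=\alpha+\xi/2$; both routes use the same three ingredients (Lemma~\ref{lem:wigner-char-relation}, Lemma~\ref{lem:trace}, and the displacement composition law) and yield identical phases and constants. One small correction to your closing remark: rapid decay of $\Wigner{\A}$ gives smoothness and boundedness of its symplectic Fourier dual $\Char{\A}$, not rapid decay (a footnote in Sec.~\ref{sec:schwartz-states} makes exactly this point), but this is harmless here since the Fubini interchange only requires the joint rapid decay of $\Matel{\A}(\beta-\xi/2,\beta+\xi/2)$ in $(\beta,\xi)$, which Corollary~\ref{cor:rapid-decay} supplies.
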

\begin{proof}
We have:
\begin{equation}
\begin{split}
\Wigner{\A}(\alpha)
&= \frac{1}{(2\pi)^{2\dm}} \int  e^{-i\alpha\wedge\xi} \trace[D_{\xi/2}\A
D_{\xi/2}]  \diff\xi\\
&= \frac{1}{(2\pi)^{3\dm}} \int 
e^{-i\alpha\wedge\xi} \braket{\chi_\beta|D_{\xi/2}\A
D_{\xi/2}|\chi_\beta}  \diff\xi\diff\beta\\
&= \frac{1}{(2\pi)^{3\dm}}\int
e^{-i\alpha\wedge\xi} \braket{\chi|D_{-\beta}D_{\xi/2}\A
D_{\xi/2}D_\beta|\chi} \diff\xi\diff\beta \\
&= \frac{1}{(2\pi)^{3\dm}} \int 
e^{-i(\alpha-\beta/2)\wedge\xi} \braket{\chi|D_{\xi/2-\beta}\A
D_{\xi/2+\beta}|\chi}  \diff\xi\diff\beta\\
&= \frac{1}{(2\pi)^{3\dm}} \int  e^{-i(\alpha-\beta/2)\wedge \xi}
\Matel{\A}(\beta-\xi/2,\beta+\xi/2) \diff\xi\diff\beta,
\end{split}
\end{equation}
where for the first line we use Lemma~\ref{lem:wigner-char-relation}, for the second line we use the trace formula 
from Lemma~\ref{lem:trace}, and for the fourth line we use the composition identity $D_\alpha D_\beta = e^{i\beta\wedge\alpha/2}D_{\alpha+\beta}$ for the displacement operator.
\end{proof}

\begin{corollary}
    \label{cor:matel-wigner-schwartz}
	For any quantum state $\A$, if the matrix element $\Matel{\A}$ is a Schwartz function, then $\Wigner{\A}$ is a Schwartz function.
\end{corollary}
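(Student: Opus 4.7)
The plan is to read the conclusion directly off the integral formula of Lemma~\ref{lem:wigner-from-matel}, which displays $\Wigner{\A}$ as (essentially) a symplectic Fourier transform of a function built from $\Matel{\A}$. Since the symplectic Fourier transform is an automorphism of $\Schwartz$, the whole task reduces to showing that this auxiliary function lies in $\Schwartz$.

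To carry this out I would set $K(\beta,\xi) := \Matel{\A}(\beta-\xi/2, \beta+\xi/2)$, which is Schwartz on $\Real^{4\dm}$ because it is the composition of the Schwartz function $\Matel{\A}$ with the linear diffeomorphism $(\beta,\xi)\mapsto(\beta-\xi/2,\beta+\xi/2)$. I would then split the phase in the formula as $e^{-i(\alpha-\beta/2)\wedge\xi} = e^{-i\alpha\wedge\xi}\,e^{i(\beta\wedge\xi)/2}$ and absorb the $(\beta,\xi)$-factor into the amplitude, defining $\tilde K(\beta,\xi) := e^{i(\beta\wedge\xi)/2} K(\beta,\xi)$; this is still Schwartz on $\Real^{4\dm}$ because the symbol $e^{i(\beta\wedge\xi)/2}$ is smooth, unimodular, and has derivatives of at most polynomial growth.

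Finally, I would integrate $\tilde K$ over $\beta$ to form $M(\xi) := \int \tilde K(\beta,\xi)\diff\beta$, which is Schwartz in $\xi$ by the standard fact that marginalising a Schwartz function in some of its variables leaves a Schwartz function in the rest (justified by differentiating under the integral and dominating by an $L^1$-majorant of the form $C_N(1+|\beta|^2+|\xi|^2)^{-N}$). The formula in Lemma~\ref{lem:wigner-from-matel} then reads
\begin{equation*}
\Wigner{\A}(\alpha) = \frac{1}{(2\pi)^{3\dm}}\int e^{-i\alpha\wedge\xi} M(\xi)\diff\xi,
\end{equation*}
i.e.\ a symplectic Fourier transform of a Schwartz function, hence Schwartz.

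There is no serious obstacle; the only point that requires a moment's care is the middle step, namely that multiplication by the oscillating symbol $e^{i(\beta\wedge\xi)/2}$ preserves the full Schwartz class and not merely rapid decay---this is what lets the $\beta$-marginal stay Schwartz in $\xi$, so that its symplectic Fourier dual in $\xi$ lands in $\Schwartz(\Real^{2\dm})$ rather than only in $\Decay(\Real^{2\dm})$.
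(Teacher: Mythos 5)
Your proposal is correct and follows essentially the same route as the paper: both read the result off Lemma~\ref{lem:wigner-from-matel} by observing that the linear change of variables, multiplication by the quadratic phase $e^{i\beta\wedge\xi/2}$, the $\beta$-marginal, and the symplectic Fourier transform each preserve the Schwartz class. The only (immaterial) difference is that you integrate over $\beta$ before taking the symplectic Fourier transform in $\xi$, whereas the paper lists the operations in the opposite order; your explicit justification of the phase-multiplication step is a welcome addition to what the paper states in one line.
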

\begin{proof}
    Lemma~\ref{lem:wigner-from-matel} shows that $\Wigner{\A}$ can be obtained from $\Matel{\A}$ by (a) multiplying by the phase function $e^{i\beta\wedge\xi/2}$ (quadratic in the variables $\beta$ and $\xi$), (b) applying a symplectic Fourier transform (exchanging the variable $\xi$ for the variable $\alpha$), and then (c) integrating over the variable $\beta$. All three of these operations preserve Schwartz-class functions, so $\Wigner{\A}$ is a Schwartz function.
\end{proof}
With all the hard work done, our main result follows easily.
\newtheorem*{theorem:main}{Theorem \ref{thm:main}}
\begin{theorem:main}
    \mainTheorem
\end{theorem:main}
\begin{proof}
    The rapid decay of $\Wigner{\rho}$ means $\infty > \int \Wigner{\rho}(\alpha)\diff \alpha=\trace[\rho]$, so $\rho$ is trace-class and hence a quantum state.
    We then conclude that $\Matel{\A}$ is rapidly decaying by Corollary~\ref{cor:rapid-decay}, so $\Matel{\A}$ is a Schwartz function by Corollary~\ref{cor:matel-schwartz}.  Therefore, $\Wigner{\A}$ is a Schwartz function by Corollary~\ref{cor:matel-wigner-schwartz}.
\end{proof}

This proof is constructive and so can in principle be used to derive effective bounds for the Schwartz seminorms for $\Wigner{\A}$ in terms of only the decay seminorms.  However, computing the bounds through this method would be very laborious, so instead we use a more direct method in the next subsection.


\subsection{Direct proof}
\label{sec:direct-proof}

The strategy rests on showing that the Schwartz seminorms
of $\Wigner{\ket{\chi_\alpha}\!\bra{\chi_\beta}}$ depend
only polynomially on $\alpha$ and $\beta$.  In
this section, we will use for convenience the Schwartz
type norms
\begin{equation}
    \|F\|_{a,b}
    := \sum_{a'\leq a}\sum_{b'\leq b}
    |F|_{a',b'}
\end{equation}
and the corresponding shorthand $\|F\|_a = \|F\|_{a,0}=\sum_{a'\leq a}|F|_{a',0}$.

\begin{lemma}
    \label{lem:wigner-decomp}
    For any quantum state $\rho$ and any reference wavefunction $\chi\in\Schwartz(\Real^\dm)$,
    \begin{equation}
    \label{eq:wigner-decomp}
        \Wigner{\rho}(\gamma) = \frac{1}{(2\pi)^{2\dm}}
        \int \Wigner{\ket{\chi_\alpha}\!\bra{\chi_\beta}} (\gamma)
        \Matel{\A}(\alpha,\beta)
        \diff\alpha\diff\beta.
    \end{equation}
\end{lemma}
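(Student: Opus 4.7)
The strategy is to unfold $\rho$ as a ``continuous superposition'' of the rank-one operators $\ket{\chi_\alpha}\!\bra{\chi_\beta}$ weighted by $\Matel{\rho}(\alpha,\beta)$, and then push the (linear) Wigner transform through the integral. Concretely, I would begin from the spectral decomposition $\rho = \sum_j \lambda_j \ket{\psi_j}\!\bra{\psi_j}$ and apply the resolution-of-identity formula~\eqref{rank-1-trace} of Lemma~\ref{lem:trace} twice to each eigenvector, writing
\begin{equation*}
    \ket{\psi_j}\!\bra{\psi_j}
    = \frac{1}{(2\pi)^{2\dm}}
    \int \braket{\chi_\alpha|\psi_j}\braket{\psi_j|\chi_\beta}\,
    \ket{\chi_\alpha}\!\bra{\chi_\beta}\,\diff\alpha\diff\beta,
\end{equation*}
and then summing over $j$ to obtain the ``coherent-state decomposition'' $\rho = (2\pi)^{-2\dm}\int \Matel{\rho}(\alpha,\beta)\ket{\chi_\alpha}\!\bra{\chi_\beta}\,\diff\alpha\diff\beta$, in which the defining identity $\sum_j \lambda_j \braket{\chi_\alpha|\psi_j}\braket{\psi_j|\chi_\beta} = \Matel{\rho}(\alpha,\beta)$ appears naturally.

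Next I would read off the corresponding kernel, $\Kernel{\rho}(x,y) = (2\pi)^{-2\dm}\int \Matel{\rho}(\alpha,\beta)\chi_\alpha(x)\bar\chi_\beta(y)\,\diff\alpha\diff\beta$, insert it into the definition~\eqref{wigner-traditional-def} of the Wigner function, and swap the $y$-integral with the $(\alpha,\beta)$-integral via Fubini. The inner integral
\begin{equation*}
    \frac{1}{(2\pi)^{\dm}}\int e^{i\gamma_\rmp\cdot y}\chi_\alpha(\gamma_\rmx-y/2)\bar\chi_\beta(\gamma_\rmx+y/2)\,\diff y
\end{equation*}
is by definition the Wigner transform of the rank-one operator $\ket{\chi_\alpha}\!\bra{\chi_\beta}$ (whose kernel is $\chi_\alpha(x)\bar\chi_\beta(y)$), yielding~\eqref{eq:wigner-decomp} directly.

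The main obstacle is rigorously justifying the two exchanges of limits. For swapping sum and integral in the decomposition of $\rho$, I would use the Cauchy–Schwartz bound on the summation, $\sum_j \lambda_j|\braket{\chi_\alpha|\psi_j}\braket{\psi_j|\chi_\beta}| \le \Husimi{\rho}(\alpha)^{1/2}\Husimi{\rho}(\beta)^{1/2}$, together with the fact that $\Husimi{\rho}$ is integrable (since $\int\Husimi{\rho}\,\diff\alpha = (2\pi)^\dm\trace[\rho]<\infty$ by Lemma~\ref{lem:trace}). This simultaneously gives $\Matel{\rho}\in L^1_{\mathrm{loc}}\cap L^2(\Real^{4\dm})$ and, combined with the Schwartz decay of $\chi$, supplies the dominating function needed to apply Fubini to the $y$-integral in the Wigner transform. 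Once these bookkeeping steps are in place the identity~\eqref{eq:wigner-decomp} is essentially an exercise in rearranging integrals.
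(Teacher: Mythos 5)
Your route is essentially correct in its main line, and it is genuinely different from the proof the paper gives. The paper works entirely in the Fourier domain: it writes $\Wigner{\rho}(\gamma)$ as the symplectic Fourier transform of $\trace[\rho D_\xi]$ via Lemma~\ref{lem:wigner-char-relation}, applies the trace formula of Lemma~\ref{lem:trace} twice to the trace-class operator $\rho D_\xi$ to produce $\int \Char{\ket{\chi_\alpha}\!\bra{\chi_\beta}}(\xi)\,\Matel{\rho}(\alpha,\beta)\diff\alpha\diff\beta$, and then inverts the symplectic Fourier transform in $\xi$ under the $(\alpha,\beta)$ integral. You instead stay in position space: a double coherent-state resolution of the identity applied to the spectral decomposition, reading off the kernel, and then the kernel-based definition~\eqref{wigner-traditional-def} plus Fubini. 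These are two faces of the same double resolution of the identity (yours is in fact the version sketched in the paper's introduction); the paper's choice keeps every object manifestly symplectic and avoids manipulating $\Kernel{\rho}$, while yours is more elementary and makes the identification of the inner $y$-integral with $\Wigner{\ket{\chi_\alpha}\!\bra{\chi_\beta}}$ immediate. Neither version is more careful than the other about exchanging integrals.

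There is, however, one concrete flaw in the justification you offer for those exchanges. You propose the majorant $\Husimi{\rho}(\alpha)^{1/2}\Husimi{\rho}(\beta)^{1/2}$ and claim its integrability follows from $\int\Husimi{\rho}\diff\alpha=(2\pi)^\dm\trace[\rho]<\infty$. But $\int\!\!\int\Husimi{\rho}(\alpha)^{1/2}\Husimi{\rho}(\beta)^{1/2}\diff\alpha\diff\beta=\bigl(\int\Husimi{\rho}(\alpha)^{1/2}\diff\alpha\bigr)^2$, and $\Husimi{\rho}\in L^1(\Real^{2\dm})$ does not give $\Husimi{\rho}^{1/2}\in L^1(\Real^{2\dm})$: for instance $f(\alpha)=(1+|\alpha|)^{-2\dm-1}$ is integrable while $\sqrt{f}$ is not, and one can realize this with an actual state such as $\rho=\sum_k ck^{-2}\ket{\chi_{z_k}}\!\bra{\chi_{z_k}}$ with $|z_k|\to\infty$. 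So for a general trace-class $\rho$ your dominating function is not integrable and the Fubini step has no majorant; indeed the right-hand side of~\eqref{eq:wigner-decomp} need not converge absolutely as a $4\dm$-dimensional integral. The repair is either to interpret the $(\alpha,\beta)$ integral as an iterated one, doing the $\beta$-integration first where the pairings converge in the $L^2$ sense guaranteed by~\eqref{rank-1-trace}, or to observe that in the only place the lemma is used (Theorem~\ref{thm:seminorm-bound}) the hypothesis of rapid decay makes $\Husimi{\rho}$ rapidly decaying by Corollary~\ref{cor:rapid-decay}, so that $\Husimi{\rho}^{1/2}$ is integrable and your domination does go through. Since the paper's own proof glosses over the corresponding interchange, this is a repair rather than a collapse of your argument, but the claimed justification as written is wrong and should not be left as is.
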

\begin{proof}
    Given the spectral decomposition $\rho = \sum_j \lambda_j\ket{\psi_j}\!\!\bra{\psi_j}$ we use Lemma~\ref{lem:wigner-char-relation} to get
    \begin{equation}
    \begin{split}
        \Wigner{\A}(\gamma) &=
        \frac{1}{(2\pi)^{2\dm}}
        \int e^{-i\gamma\wedge\xi}\trace[\A D_\xi]\diff \xi\\
        &=\frac{1}{(2\pi)^{4\dm}}
        \int e^{-i\gamma\wedge\xi}\braket{\chi_\beta|D_\xi|\chi_\alpha}
        \braket{\chi_\alpha|\A|\chi_\beta}\diff\alpha\diff\beta\diff\xi\\
        &= \frac{1}{(2\pi)^{4\dm}}
        \int e^{-i\gamma\wedge\xi}\Char{\ket{\chi_\alpha}\!\bra{\chi_\beta}}(\xi)
        \Matel{\A}(\alpha,\beta)\diff\alpha\diff\beta\diff\xi
    \end{split}
    \end{equation}
    where to get the second line we apply Lemma~\ref{lem:trace} to the trace-class operator $\A D_\xi$ twice. Using Lemma~\ref{lem:wigner-char-relation} yields \eqref{eq:wigner-decomp}.
\end{proof}

\begin{lemma}
    \label{lem:off-diagonal-wigner}
    The Schwartz seminorms of the Wigner transform of the off-diagonal operator $\ket{\chi_\alpha}\!\!\bra{\chi_\beta}$ obey
    \begin{align}
        \label{Wab-schwartz-bound-tight}
        |\Wigner{\ket{\chi_\alpha}\!\bra{\chi_\beta}}|_{a,b}
        &\le \sum_{c\le b}\sum_{d\le a}\sum_{e\le c}\sum_{f\le d} \binom{b}{c}\binom{a}{d}\binom{c}{e}\binom{d}{f}2^{-|d|}
        |\Wigner{\chi}|_{a-d,b-c}
        |\alpha^{\hat{e}+f}\beta^{\hat{c}-\hat{e}+d-f}| \\
        \label{Wab-schwartz-bound-loose}
        &\leq 4^{|a|+|b|}(1 + |\alpha| + |\beta|)^{|a|+|b|} \|\Wigner{\chi}\|_{a,b},
    \end{align}
    where we use a hat to swap the position and momentum components of a multi-index: $\hat{a}=\widehat{(a_\rmx,a_\rmp)} :=(a_\rmp,a_\rmx)$.  
\end{lemma}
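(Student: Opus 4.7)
The plan is to derive a factorized closed-form expression for $\Wigner{\ket{\chi_\alpha}\!\bra{\chi_\beta}}(\gamma)$ as a single translate of $\Wigner{\chi}$ multiplied by a pure phase linear in $\gamma$, then to apply $\gamma^a\partial_\gamma^b$ via the Leibniz rule and the binomial formula. Concretely: starting from \eqref{wigner-traditional-def} and using the explicit form of $D_\alpha\chi$ from Lemma~\ref{lem:displacement}, with $\mu := (\alpha+\beta)/2$ and $\delta := \alpha - \beta$, reorganizing the phase of the resulting integrand and substituting $y \to y' := y + \alpha_\rmx - \beta_\rmx$ should yield
\begin{equation}
\label{eq:Wab-closed-form-sketch}
\Wigner{\ket{\chi_\alpha}\!\bra{\chi_\beta}}(\gamma) = e^{i(\gamma - \mu)\wedge\delta - i\alpha\wedge\beta/2}\,\Wigner{\chi}(\gamma - \mu).
\end{equation}
The phase is unit-modulus and, crucially, linear in $\gamma$.

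Next I would apply $\gamma^a\partial_\gamma^b$ to \eqref{eq:Wab-closed-form-sketch}. Writing $\gamma^a = \sum_{d\le a}\binom{a}{d}\mu^d(\gamma-\mu)^{a-d}$ handles the multiplier. The derivative $\partial_\gamma^b$ distributes by Leibniz between the phase and $\Wigner{\chi}(\gamma-\mu)$, and because the phase is linear in $\gamma$, differentiating $c$ times contributes a factor
\begin{equation*}
\partial_\gamma^c\bigl[e^{i(\gamma-\mu)\wedge\delta}\bigr] = i^{|c|}(-1)^{|c_\rmp|}\,\delta^{\hat c}\,e^{i(\gamma-\mu)\wedge\delta},
\end{equation*}
where the hat swaps $\rmx$ and $\rmp$ components, reflecting the action of the symplectic form on the gradient. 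After taking absolute values and supping over $\gamma$ (equivalently, over $\gamma - \mu$), the factor $(\gamma-\mu)^{a-d}\partial_\gamma^{b-c}\Wigner{\chi}(\gamma-\mu)$ collapses to $|\Wigner{\chi}|_{a-d, b-c}$ by definition, leaving
\begin{equation*}
|\Wigner{\ket{\chi_\alpha}\!\bra{\chi_\beta}}|_{a,b} \le \sum_{c\le b}\sum_{d\le a}\binom{b}{c}\binom{a}{d}|\mu^d|\,|\delta^{\hat c}|\,|\Wigner{\chi}|_{a-d, b-c}.
\end{equation*}

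The tight bound \eqref{Wab-schwartz-bound-tight} then follows by expanding $\mu^d = 2^{-|d|}(\alpha+\beta)^d$ (producing the inner sum over $f\le d$ and the factor $2^{-|d|}$) and $\delta^{\hat c} = (\alpha-\beta)^{\hat c}$ (producing the inner sum over $e\le c$), and taking absolute values term by term. The loose bound \eqref{Wab-schwartz-bound-loose} follows by estimating $|\alpha^{\hat e + f}\beta^{\hat c -\hat e + d - f}| \le (|\alpha|+|\beta|)^{|c|+|d|} \le (1+|\alpha|+|\beta|)^{|a|+|b|}$, bounding $|\Wigner{\chi}|_{a-d, b-c}\le\|\Wigner{\chi}\|_{a,b}$, and collapsing the iterated binomial sums using $\sum_{c\le b}\binom{b}{c}2^{|c|}=3^{|b|}$ and $\sum_{d\le a}\binom{a}{d}2^{-|d|}\sum_{f\le d}\binom{d}{f}=2^{|a|}$, yielding a prefactor $3^{|b|}\cdot 2^{|a|}\le 4^{|a|+|b|}$.

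The main obstacle is the multi-index bookkeeping, especially tracking the swap $\hat{\cdot}$ induced by the symplectic gradient of the phase and pairing the four expansion indices $c,d,e,f$ with the correct powers of $\alpha$ versus $\beta$. With \eqref{eq:Wab-closed-form-sketch} in hand, however, everything else is a disciplined application of Leibniz's rule and the binomial theorem; no deeper input is needed.
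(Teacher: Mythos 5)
Your proposal is correct and follows essentially the same route as the paper: the closed form \eqref{eq:Wab-closed-form-sketch} agrees with the paper's $e^{i(\gamma-\bar\alpha/2)\wedge\Delta\alpha}\Wigner{\chi}(\gamma-\bar\alpha)$ (both phases reduce to $e^{i\gamma\wedge\delta+i\alpha\wedge\beta/2}$), the only cosmetic difference being that the paper derives it via the quasicharacteristic function $\Char{\ket{\chi_\alpha}\!\bra{\chi_\beta}}$ rather than directly from the kernel integral. The subsequent Leibniz/binomial bookkeeping, the appearance of $\delta^{\hat c}$ from the symplectic gradient of the phase, and the collapsing of the sums for the loose bound all match the paper's argument.
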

\begin{proof}
    First note that 
    \begin{equation}
    \begin{split}
        \Char{\ket{\chi_\alpha}\!\bra{\chi_\beta}}(\xi) &= \trace[\ket{\chi_\alpha}\!\!\bra{\chi_\beta} D_\xi]
        = \braket{\chi|D_{-\beta}D_\xi D_\alpha|\chi}
        \\
        &= e^{i(\alpha+\beta)\wedge\xi/2+i\beta\wedge\alpha/2}
        \braket{\chi|D_{\xi+\alpha-\beta}|\chi}
        = e^{i\bar\alpha\wedge\xi+i\bar\alpha\wedge\Delta\alpha/2}
        \Char{\chi}(\xi+\Delta\alpha)
    \end{split}
    \end{equation}
    where in the last line we introduced the shorthand $\bar\alpha = (\alpha+\beta)/2$ and $\Delta\alpha = \alpha - \beta$.  Then $\Wigner{\ket{\chi_\alpha}\!\bra{\chi_\beta}}$ is related to $\Wigner{\chi}$ by
    \begin{equation}
    \begin{split}
        \label{eq:off-diagonal-wigner}
        \Wigner{\ket{\chi_\alpha}\!\bra{\chi_\beta}}(\gamma) &= 
        \int e^{-i\gamma\wedge\xi} \Char{\ket{\chi_\alpha}\!\bra{\chi_\beta}}(\xi) \diff\xi
        \\
        &= 
        \int e^{i(\bar\alpha-\gamma)\wedge\xi+i\bar\alpha\wedge\Delta\alpha/2}
        \Char{\chi}(\xi+\Delta\alpha) \diff\xi
        \\
        &= 
        \int e^{i(\bar\alpha-\gamma)\wedge(\xi-\Delta\alpha)+i\bar\alpha\wedge\Delta\alpha/2}
        \Char{\chi}(\xi) \diff\xi\\
        &= 
        e^{i(\gamma-\bar\alpha/2)\wedge\Delta\alpha} \Wigner{\chi}(\gamma-\bar\alpha),
    \end{split}
    \end{equation}
    so
    \begin{equation}
    \begin{split}
        \gamma^a\partial_\gamma^b
        \Wigner{\ket{\chi_\alpha}\!\bra{\chi_\beta}}(\gamma)
        &= 
        \gamma^a\partial_\gamma^b
        e^{i(\gamma-\bar\alpha/2) \wedge\Delta\alpha}
        \Wigner{\chi} (\gamma-\bar\alpha)\\
        &= \sum_{c\le b} \binom{b}{c}
        e^{i(\gamma-\bar\alpha/2) \wedge\Delta\alpha}
        (i\Omega\cdot \Delta\alpha)^c ((\gamma-\bar\alpha)+\bar\alpha)^a
        \partial_\gamma^{b-c}\Wigner{\chi} (\gamma-\bar\alpha)
    \end{split}
    \end{equation}
    where in the second line we used 
    \begin{equation}
    \begin{split}
    \partial_\gamma^b e^{i\gamma\wedge\xi}= \partial_{\gamma_\rmx}^{b_\rmx}\partial_{\gamma_\rmp}^{b_\rmp}e^{i\gamma_\rmx \cdot\xi_\rmp-i\gamma_\rmp \cdot\xi_\rmx} = (i\xi_\rmp)^{b_\rmx}(-i\xi_\rmx)^{b_\rmp} e^{i\gamma_\rmx \cdot\xi_\rmp-i\gamma_\rmp \cdot\xi_\rmx} = (i\Omega\cdot\xi)^b e^{i\gamma\wedge\xi}
    \end{split}
    \end{equation}
    Then, using $|(i\Omega\cdot\xi)^b|=|\xi_\rmp^{b_\rmx}\xi_\rmx^{b_\rmp}| = |\xi^{\hat b}|$, it follows that
    \begin{equation}
    \begin{split}
        |\Wigner{\ket{\chi_\alpha}\!\bra{\chi_\beta}}|_{a,b}
        &\leq \sum_{c\le b}\sum_{d\le a} \binom{b}{c}\binom{a}{d} 
        |\Wigner{\chi}|_{a-d,b-c}
        |\Delta\alpha^{\hat{c}}\bar\alpha^d|\\
        &\le \sum_{c\le b}\sum_{d\le a}\sum_{e\le c}\sum_{f\le d} \binom{b}{c}\binom{a}{d}\binom{c}{e}\binom{d}{f}2^{-|d|}
        |\Wigner{\chi}|_{a-d,b-c}
        |\alpha^{\hat{e}+f}\beta^{\hat{c}-\hat{e}+d-f}|\\
        &\leq 
        4^{|a|+|b|}(1+|\alpha|+|\beta|)^{|a|+|b|}
        \|\Wigner{\chi}\|_{a,b}.
    \end{split}
    \end{equation}
    To get the third line, we bound the terms in the sum on the second line using $2^{-|d|}\le 1$, $|\Wigner{\chi}|_{a-d,b-c} \le \|\Wigner{\chi}\|_{a,b}$, and $|\alpha^{\hat{e}+f}\beta^{\hat{c}-\hat{e}+d-f}| \le (|1+|\alpha|+|\beta|)^{|a|+|b|}$ and then sum the binomial coefficients.
\end{proof}

\begin{theorem}
\label{thm:seminorm-bound}
For any quantum state $\A$ and reference state $\chi\in\Schwartz(\Real^\dm)$, the following inequality holds:  
\begin{align}
    |\Wigner{\A}|_{a,b} 
    \label{full-decay-bd}
    &\le (2\pi)^{5\dm} 2^{4(|a|+|b|+\dm)}
    \|\Wigner{\chi}\|_{a,b}\|\Wigner{\chi}\|_{2(a+\hat{b})+6}
    \|\Wigner{\A}\|_{2(a+\hat{b})+4}.
\end{align}
\end{theorem}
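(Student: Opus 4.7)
The plan is to apply $\gamma^a\partial_\gamma^b$ to the wavepacket decomposition from Lemma~\ref{lem:wigner-decomp}, then chain together the pointwise bound on the off-diagonal Wigner transform (Lemma~\ref{lem:off-diagonal-wigner}), the Cauchy-Schwarz bound on the matrix element (Lemma~\ref{lem:husimi-bound-matel}), and the Husimi-Wigner convolution identity (Lemma~\ref{lem:husimi-convolution}) to transfer the derivative estimates back onto decay seminorms of $\Wigner{\A}$ and $\Wigner{\chi}$.

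First, because only the off-diagonal Wigner transform depends on $\gamma$, $\gamma^a\partial_\gamma^b$ commutes past the integral in Lemma~\ref{lem:wigner-decomp}. Taking absolute values and $\sup_\gamma$ yields
\begin{equation*}
|\Wigner{\A}|_{a,b} \leq \frac{1}{(2\pi)^{2\dm}} \int |\Wigner{\ket{\chi_\alpha}\!\bra{\chi_\beta}}|_{a,b}\, |\Matel{\A}(\alpha,\beta)|\diff\alpha\diff\beta.
\end{equation*}
The tight estimate \eqref{Wab-schwartz-bound-tight} dominates the first factor by a sum of monomials $|\alpha^p\beta^q|$ with $p,q\leq a+\hat{b}$ componentwise, weighted by binomials and by $|\Wigner{\chi}|_{a-d,b-c}\leq\|\Wigner{\chi}\|_{a,b}$. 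Lemma~\ref{lem:husimi-bound-matel} replaces $|\Matel{\A}(\alpha,\beta)|$ by $\sqrt{\Husimi{\A}(\alpha)\Husimi{\A}(\beta)}$, so the remaining double integral factorizes into a product $J_p\cdot J_q$ of single integrals $J_p:=\int|\alpha^p|\sqrt{\Husimi{\A}(\alpha)}\diff\alpha$.

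To bound $J_p$ for $p\leq a+\hat{b}$ componentwise I will use the product weight $W(\alpha):=\prod_{i=1}^{2\dm}(1+|\alpha_i|)^{-2}$, which satisfies $\int W\diff\alpha=2^{2\dm}$. Cauchy-Schwarz on the integral gives $J_p^2\leq 2^{2\dm}\int W^{-1}|\alpha^{2p}|\Husimi{\A}\diff\alpha$, and rewriting the last integrand as $(W^{-2}|\alpha^{2p}|\Husimi{\A})\cdot W$ and supping out the first factor produces $J_p^2\lesssim\|\Husimi{\A}\|_{2(a+\hat{b})+4}$, because expanding $W^{-2}|\alpha^{2p}|=\prod_i(1+|\alpha_i|)^4|\alpha_i|^{2p_i}$ yields only monomials of componentwise degree bounded by $2(a+\hat{b})+4$. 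Finally, Lemma~\ref{lem:husimi-convolution} writes $\Husimi{\A}=(2\pi)^\dm\Wigner{\A}\ast\WignerMinus{\chi}$; Leibniz-expanding $\alpha^{c'}=((\alpha+\gamma)-\gamma)^{c'}$ distributes the polynomial weight across the convolution, bounding each $|\Husimi{\A}|_{c'}$ by $\sum_{c''\leq c'}\binom{c'}{c''}|\Wigner{\A}|_{c'-c''}\cdot\int|\gamma^{c''}||\Wigner{\chi}(\gamma)|\diff\gamma$, and one more application of the $W$-weight trick bounds the $\Wigner{\chi}$-integral by a constant multiple of $\|\Wigner{\chi}\|_{c''+2}$. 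Setting $c=2(a+\hat{b})+4$ produces exactly the seminorms $\|\Wigner{\A}\|_{2(a+\hat{b})+4}$ and $\|\Wigner{\chi}\|_{2(a+\hat{b})+6}$ claimed in the theorem.

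The main obstacle is combinatorial bookkeeping rather than any new analytic difficulty: each Cauchy-Schwarz or sup-out step introduces binomial-coefficient sums and dimension-dependent factors of $2^\dm$ and $(2\pi)^\dm$, and the tedious part of the proof is consolidating them (without losing powers of $2$ or $\pi$ to crude bounds) into the single explicit constant $(2\pi)^{5\dm}2^{4(|a|+|b|+\dm)}$ stated in the theorem.
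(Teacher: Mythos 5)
Your proposal is correct and follows essentially the same route as the paper's proof: decompose via Lemma~\ref{lem:wigner-decomp}, bound the off-diagonal Wigner transform by \eqref{Wab-schwartz-bound-tight} and the matrix element by Lemma~\ref{lem:husimi-bound-matel}, factorize the double integral, sup out a polynomially weighted $\sqrt{\Husimi{\A}}$ against an integrable weight, and finally convert $\|\Husimi{\A}\|_{2(a+\hat b)+4}$ into $\|\Wigner{\A}\|_{2(a+\hat b)+4}\|\Wigner{\chi}\|_{2(a+\hat b)+6}$ via Lemma~\ref{lem:husimi-convolution}. The only differences are cosmetic (your weight $\prod_i(1+|\alpha_i|)^{-2}$ versus the paper's $\prod_i(1+|\alpha_i|^2)^{-1}$, and phrasing the sup-out as a Cauchy--Schwarz step), plus the constant-tracking you defer, which the paper carries out explicitly.
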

Note that the right-hand side contains only the decay 
norms of $\Wigner{\rho}$, and the left-hand side contains
an arbitrary Schwartz seminorm, so this implies Theorem~\ref{thm:main}.  We also observe that, on the right-hand side, the position and momentum indices are flipped in the derivative multi-index $\hat{b} = \widehat{(b_\rmx,b_\rmp)} = (b_\rmp,b_\rmx)$ when it contributes to a coordinate power (rather than a derivative power).
\begin{proof}
We start with Eq.~\eqref{eq:wigner-decomp} of Lemma~\ref{lem:wigner-decomp} and apply Lemma~\ref{lem:husimi-bound-matel} followed by Eq.~\eqref{Wab-schwartz-bound-tight} from Lemma~\ref{lem:off-diagonal-wigner} to get
\begin{equation}
\begin{split}
    |\gamma^a\partial_\gamma^a\Wigner{\A}(\gamma)| 
    &\le \frac{1}{(2\pi)^{2\dm}}
    \int |\gamma^a\partial_\gamma^a\Wigner{\ket{\chi_\alpha}\!\bra{\chi_\beta}} (\gamma)| 
    |\Husimi{\A}(\alpha)|^{1/2} 
    |\Husimi{\A}(\beta)|^{1/2}
    \diff\alpha\diff\beta\\
    &\le \frac{1}{(2\pi)^{2\dm}}\sum_{c\le b}\sum_{d\le a}\sum_{e\le c}\sum_{f\le d} \binom{b}{c}\binom{a}{d}\binom{c}{e}\binom{d}{f}2^{-|d|}
    |\Wigner{\chi}|_{d-a,b-c}\\
    &\qquad \times \int  \frac{1}{\prod_{j=1}^{2\dm}(1+|\alpha_j|^2) }
    |\alpha^{\hat{e}+f}|
    \prod_{j=1}^{2\dm}(1+|\alpha_j|^2) |\Husimi{\A}(\alpha)|^{1/2}\diff\alpha\\
    &\qquad \times \int  \frac{1}{\prod_{j=1}^{2\dm}(1+|\beta_j|^2) }
    |\beta^{\hat{c}-\hat{e}+d-f}|
    \prod_{j=1}^{2\dm}(1+|\beta_j|^2)|\Husimi{\A}(\beta)|^{1/2}\diff\beta.
\end{split}
\end{equation}
To compute the integral over $\alpha$, we use
\begin{equation}
\begin{split}
    |\alpha^{\hat{e}+f}| \prod_{j=1}^{2\dm}(1+|\alpha_j^2|)|\Husimi{\A}(\alpha)|^{1/2}
    &\leq 2^{2\dm}\|\Husimi{\A}\|_{2\hat{e}+2f+4}^{1/2} 
    \leq 2^{2\dm} \|\Husimi{\A}\|_{2(a+\hat{b})+4}^{1/2},
\end{split}
\end{equation}
and likewise for the integral over $\beta$.  Integrating over $\alpha$ with $\int (1+|\alpha_j|^2)^{-1}\diff\alpha_j = \pi$ and likewise for $\beta$, we obtain
\begin{align}
    |\Wigner{\A}|_{a,b} 
    &\le \left(\frac{\pi}{2}\right)^{2\dm}
    \|\Husimi{\A}\|_{2(a+\hat{b})+4} 
    \sum_{c\le b}\sum_{d\le a}\sum_{e\le c}\sum_{f\le d} \binom{b}{c}\binom{a}{d}\binom{c}{e}\binom{d}{f}2^{-|d|}
    |\Wigner{\chi}|_{d-a,b-c} \nonumber \\
    \label{Husimi-decay-bd}
    &\le \left(\frac{\pi}{2}\right)^{2\dm}
    2^{2(|a|+|b|)}\|\Wigner{\chi}\|_{a,b}\|\Husimi{\A}\|_{2(a+\hat{b})+4}
\end{align}
using $2^{-|d|}\le 1$ and $|\Wigner{\chi}|_{d-a,b-c}\le\|\Wigner{\chi}\|_{a,b}$.
We then bound the decay seminorms of $\Husimi{\A}$ using Lemma~\ref{lem:husimi-convolution}:
\begin{equation}
\begin{split}
    \alpha^a\Husimi{\A}(\alpha) = 
    (2\pi)^\dm \int \Wigner{\A}(\alpha-\beta) ((\alpha-\beta)+\beta)^a \WignerMinus{\chi}(\beta)
    \diff\beta
\end{split}
\end{equation}
so 
\begin{equation}
\begin{split}
    |\alpha^a\Husimi{\A}(\alpha)| &\le 
    (2\pi)^\dm \sum_{b\le a} \binom{a}{b}\int \frac{1}{\prod_{j=1}^{2\dm}(1+|\beta_j|^2)} |(\alpha-\beta)^b \Wigner{\A}(\alpha-\beta)|\\
    &\qquad\qquad\qquad\qquad\qquad \times |\beta^{a-b}|\prod_{j=1}^{2m}(1+|\beta_j|^2)|\WignerMinus{\chi}(\beta)|\diff \beta \\
    &\leq (2\pi)^n 2^{|a|} \pi^{2n}
    \|\Wigner{\A}\|_a \|\Wigner{\chi}\|_{a+2}
\end{split}
\end{equation}
using $\|\Wigner{\chi}\|_{a+2} = \|\WignerMinus{\chi}\|_{a+2}$. 
Summing over the seminorms in the norm,
\begin{align}
    \|\Husimi{\A}\|_a = \sum_{b\le a} |\Husimi{\A}|_b
    &\le 2^n \pi^{3n}  \sum_{b\le a} 2^{|b|} 
    \|\Wigner{\A}\|_b \|\Wigner{\chi}\|_{b+2} \\
    &\le 2^{3n+|a|}\pi^{3n} \|\Wigner{\A}\|_a \|\Wigner{\chi}\|_{a+2},
\end{align}
and then inserting into \eqref{Husimi-decay-bd} yields \eqref{full-decay-bd}.
\end{proof}





\section{Schwartz states}
\label{sec:schwartz-states}



In this section, we extend our main result by establishing an equivalence between the Schwartz-class and rapid-decay properties of many different representations of the quantum state.  
First, we will give a notion of Schwartz class to a set of orthogonal wavefunction $\{\psi_i\}$ appearing in a spectral decomposition $\Kernel{\A}(x,y)=\sum_i\psi_i(x)\bar{\psi}_i(y)$.
Then, we recall the definition of a Schwartz operator as identified by Keyl et al.~\cite{keyl2016schwartz}.  Finally, we will prove our large equivalence theorem.


To guarantee that $\Kernel{\A}$ is a Schwartz function, it is, of course, not sufficient for each $\psi_i$ to be a Schwartz function.  For example, if each $\psi_i$ is a Gaussian wavepacket with increasing variance $\sigma^2_i \propto i$ centered on the origin, then the overall variance $\langle X^2 \rangle = \Tr[X^2 \A] = \int x^2 \Kernel{\A}(x,x) \diff x = \sum_i\int x^2 |\psi_i(x)|^2\diff x = \sum_i p_i \sigma_i^2$ can diverge if the norms $p_i = \braket{\psi_i|\psi_i}$ are decreasing slowly, so that $\Kernel{\A}$ is not a Schwartz function even though each $\psi_i$ is.  Instead, we consider the following definition.\footnote{Note that the multi-indices $a,b,c,d$ in this section are $\dm$-dimensional rather than $2\dm$-dimensional because the wavefunction $\psi$ and the kernel $\Kernel{\A}$ take arguments in position space rather than phase space.}
\begin{definition}
    \label{def:jointly-schwartz}
     A set $\{\psi_i\}$ of unnormalized wavefunctions ($\psi_i\in L^2(\Real^{\dm})$ for all $i$) is \emph{jointly Schwartz} when $|\{\psi_i\}|_{a,b} < \infty$ for all $a,b \in (\bbN \cup \{0\})^{\times \dm}$, where the Schwartz seminorms of such a set are defined by
    \begin{equation}
        \label{jointly-schwartz}
        |\{\psi_i\}|_{a,b}^2 := \sup_{x}\sum_i \left|x^a\partial_x^b\psi_i(x)\right|^2.
    \end{equation}
    This is denoted $\{\psi_i\}\in \Schwartz_{\mathrm{j}}(\Real^{\dm})$.
\end{definition}
Note that this seminorm \eqref{jointly-schwartz} is not simply a function of the seminorm $|\Kernel{\A}|_{(a,c),(b,d)}$ of the kernel, nor is it simply a function of the individual seminorms $|\psi_i|_{a,b}:=\sup_x |x^a\partial_x^b\psi_i(x)|$ of the wavefunctions $\psi_i$.
However, one can check that when the set $\{\psi_i\}$ is finite, the jointly Schwartz property is equivalent to the condition that all the wavefunctions are Schwartz functions individually, $\{\psi_i\}\subset\Schwartz(\Real^{\dm})$.

\begin{lemma}
\label{lem:jointly-schwartz}
For any quantum state $\A$, the set of unnormalized wavefunctions $\{\psi_i\}$ of the spectral decomposition is jointly Schwartz if and only if the kernel $\Kernel{\A}$ is a Schwartz function.
\end{lemma}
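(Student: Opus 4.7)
The plan is to exploit the spectral identity $\Kernel{\A}(x,y) = \sum_i \psi_i(x)\bar\psi_i(y)$ together with the heuristic that joint Schwartz-ness of the eigenfunctions corresponds to Schwartz-ness of the kernel read along the diagonal $y = x$ after applying a suitable differential operator. The core identity driving both directions is
\begin{equation*}
    \sum_i \bigl|x^a \partial_x^b \psi_i(x)\bigr|^2
    = x^{2a}\bigl[\partial_x^b \partial_y^b\,\Kernel{\A}(x,y)\bigr]\Big|_{y=x},
\end{equation*}
and each direction will follow by reading this formula the appropriate way.

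For the forward direction, I would assume $\{\psi_i\}$ is jointly Schwartz and apply the Cauchy--Schwartz inequality to the spectral sum over $i$. The hypothesis uniformly controls the $\ell^2$-norm of the sequence $\{x^a\partial_x^b\psi_i(x)\}_i$, so for any multi-indices $a,b,c,d$,
\begin{equation*}
    \sum_i \bigl|x^a\partial_x^b\psi_i(x)\bigr|\,\bigl|y^c\partial_y^d\psi_i(y)\bigr|
    \le |\{\psi_i\}|_{a,b}\,|\{\psi_i\}|_{c,d}
\end{equation*}
uniformly in $(x,y)$. This absolute and uniform convergence simultaneously justifies termwise differentiation of the kernel identity and bounds $|x^a y^c\partial_x^b\partial_y^d \Kernel{\A}(x,y)|$ by the product of joint seminorms, showing $\Kernel{\A}$ is Schwartz.

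For the reverse direction, I would assume $\Kernel{\A}$ is Schwartz and introduce the operator $T$ defined by $T\phi(x) = x^a\partial_x^b\phi(x)$. A short integration-by-parts calculation shows that $T\A T^\dagger$ has kernel $x^a y^a\partial_x^b\partial_y^b\Kernel{\A}(x,y)$, which remains Schwartz since Schwartz functions are closed under multiplication by polynomials and differentiation. On the other hand, the spectral decomposition gives $T\A T^\dagger = \sum_i \ket{T\psi_i}\!\bra{T\psi_i}$ at least as a sesquilinear form on $\Schwartz(\Real^\dm)$, and its diagonal kernel values equal $\sum_i|T\psi_i(x)|^2 = \sum_i|x^a\partial_x^b\psi_i(x)|^2$. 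Equating the diagonal of the Schwartz kernel with this nonnegative series yields $\sup_x\sum_i |x^a\partial_x^b\psi_i(x)|^2 \le \sup_{x,y}|x^a y^a \partial_x^b\partial_y^b\Kernel{\A}(x,y)| < \infty$.

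The main obstacle is justifying the diagonal-identification step in the reverse direction: that $\sum_i|T\psi_i(x)|^2$ equals the pointwise diagonal value of the (Schwartz, hence continuous) kernel of $T\A T^\dagger$. Schwartz-ness of $\Kernel{\A}$ forces $\A$ to map $L^2(\Real^\dm)$ into $\Schwartz(\Real^\dm)$ via the Schwartz kernel theorem, so each $\psi_i = p_i^{-1}\A\psi_i$ is itself Schwartz and each $T\psi_i$ lies in $L^2$ with $\sum_i\|T\psi_i\|_{L^2}^2 = \Tr[T\A T^\dagger] < \infty$. The diagonal identity then follows by testing both sides against an approximate identity $\phi_\varepsilon$ concentrated at $x$: the sesquilinear form $\langle \phi_\varepsilon, T\A T^\dagger\phi_\varepsilon\rangle$ converges to the continuous diagonal kernel value as $\varepsilon \to 0$, while on the spectral side each $|\langle \phi_\varepsilon, T\psi_i\rangle|^2 \to |T\psi_i(x)|^2$ with the $\ell^1$-summable uniform majorant $\|T\psi_i\|_{L^2}^2$ allowing dominated convergence in $i$.
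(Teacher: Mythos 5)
Your proof follows essentially the same route as the paper's: Cauchy--Schwartz applied to the spectral sum $\Kernel{\A}(x,y)=\sum_i\psi_i(x)\bar\psi_i(y)$ for the forward direction, and restriction of $x^a y^a\partial_x^b\partial_y^b\Kernel{\A}$ to the diagonal $y=x$ for the reverse direction; in fact you supply more justification for the diagonal identity than the paper, which simply asserts it. One small technical slip: $\|T\psi_i\|_{L^2}^2$ does not majorize $|\langle\phi_\eps,T\psi_i\rangle|^2$ uniformly in $\eps$, since an approximate identity with $\langle\phi_\eps,f\rangle\to f(x)$ must be $L^1$-normalized and then $\|\phi_\eps\|_{L^2}\to\infty$; but because only the one-sided bound $\sum_i|T\psi_i(x)|^2\le\sup_{x,y}|x^a y^a\partial_x^b\partial_y^b\Kernel{\A}(x,y)|$ is needed, Fatou's lemma in $i$ (as $\eps\to0$) closes the argument and the conclusion stands.
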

\begin{proof}
First assume that $\{\psi_i\}\in \Schwartz_{\mathrm{j}}(\Real^{\dm})$.  Then
\begin{equation}
\begin{split}
|\Kernel{\A}|_{(a,c),(b,d)} &= \sup_{x,y}\left|x^a y^c\partial_x^b\partial_y^d K(x,y)\right|\\
&=\sup_{x,y} \left|\sum_i \left(x^a\partial_x^b\psi_i(x)\right)\left( y^{c}\partial_y^{d}\bar\psi_i(y)\right)\right|\\
&\le \sup_{x,y} \left(\sum_i \left|x^a\partial_x^b\psi_i(x)\right|^2\right)^{1/2}\left(\sum_i \left|y^{c}\partial_y^{d}\psi_i(y)\right|^2\right)^{1/2}\\
&=  \left(\sup_{x}\sum_i \left|x^a\partial_x^b\psi_i(x)\right|^2\right)^{1/2}\left(\sup_{y}\sum_i \left|y^{c}\partial_y^{d}\psi_i(y)\right|^2\right)^{1/2}\\
&=|\{\psi_i\}|_{a,b}|\{\psi_i\}|_{c,d},
\end{split}
\end{equation}
where the third line is the Cauchy-Schwartz inequality.
Therefore, $\{\psi_i\}\in \Schwartz_{\mathrm{j}}(\Real^{\dm})\Rightarrow\Kernel{\A}\in \Schwartz(\Real^{2\dm})$. To see the inverse, note that 
\begin{equation}
\begin{split}
|\Kernel{\A}|_{(a,a),(b,b)} &= \sup_{x,y}\left|x^a y^a\partial_x^b\partial_y^b K(x,y)\right|\\
&\ge \sup_{x}\left|\left(x^a y^a\partial_x^b\partial_y^b K\right)(x,x)\right|\\
&=\sup_x \sum_i |x^a\partial_x^b\psi_i|^2\\
\end{split}
\end{equation}
where the inequality holds because $\Kernel{\A}$ is a Schwartz function.  This quantity diverges by definition if $\{\psi_i\}\notin \Schwartz_{\mathrm{j}}(\Real^{\dm})$, implying $\Kernel{\A}\notin \Schwartz(\Real^{2\dm})$.
\end{proof}

The natural way to characterize the Schwartz class of quantum states was suggested by Keyl et al.~\cite{keyl2016schwartz} as those quantum states $\rho$ with bounded expectation value for all symmetric polynomials in $X$ and $P$ (i.e., $\trace[X^a P^b\rho P^b X^a]<\infty$ for all $a,b\in(\mathbb{N}\cup\{0\})^{\times \dm}$).  More generally, for arbitrary operators (i.e., not necessarily positive semidefinite, self-adjoint, or trace-class), they define:
\begin{definition}
    \label{def:schwartz-operator}
    An operator $E$ is a Schwartz operator, denoted $E\in \Schwartz(L^2(\Real^{\dm}))$, when $|E|_{a,b,c,d} < \infty$ for all $a,b,c,d \in (\bbN \cup \{0\})^{\times\dm}$, where the Schwartz operator seminorms are
    \begin{equation}
    |E|_{a,b,c,d} := \sup_{|\psi|,|\phi| = 1 } \left|\left\langle \psi \left|X^a P^b E P^c X^d \right| \phi \right\rangle\right|.
    \end{equation}
    Here, the supremum is taken over all normalized wavefunctions $\psi,\phi \in L^2(\Real^{\dm})$.
\end{definition}

Equipped with Definitions~\ref{def:jointly-schwartz} and \ref{def:schwartz-operator}, we can state a theorem that subsumes the results of Sec. ~\ref{sec:proof}.

\begin{theorem}
\label{thm:mega-equivalence}
For any Schwartz-class reference wavefunction $\chi\in\Schwartz(\Real^\dm)$ and for any quantum state 
(i.e., positive semidefinite trace-class operator on $L^2(\Real^\dm)$)
$\A$, with spectral decomposition $\{\psi_i\}$, quasicharacteristic function $\Char{\A}$, Wigner function $\Wigner{\A}$, kernel $\Kernel{\A}$, Husimi function $\Husimi{\A}$, and matrix element $\Matel{\A}$, the following conditions are equivalent:
\begin{itemize}
	\item $\Wigner{\A}\in \Schwartz(\Real^{2\dm})$
	\item $\Wigner{\A}\in \Decay(\Real^{2\dm})$
	\item $\Husimi{\A}\in \Schwartz(\Real^{2\dm})$
	\item $\Husimi{\A}\in \Decay(\Real^{2\dm})$
	\item $\Matel{\A}\in \Schwartz(\Real^{4d})$
	\item $\Matel{\A}\in \Decay(\Real^{4d})$
	\item $\Char{\A}\in \Schwartz(\Real^{2\dm})$
    \item $\Kernel{\A}\in \Schwartz(\Real^{2\dm})$
    \item $\{\psi_i\}\in \Schwartz_{\mathrm{j}}(\Real^{\dm})$
    \item $\A \in \Schwartz(L^2(\Real^{\dm}))$ 
\end{itemize}
Furthermore, if the set $\{\psi_i\}$ is finite (e.g., if the state is pure, $\A=|\psi\rangle\!\langle\psi|$), then the condition $\{\psi_i\}\subset\Schwartz(\Real^{\dm})$ is also equivalent to the above.
\end{theorem}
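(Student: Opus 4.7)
The plan is to establish a cycle of implications among conditions~(1)--(10). The first nine are handled essentially by invoking results already proven: $(2)\Rightarrow(1)$ is the main theorem; $(1)\Leftrightarrow(7)$ follows from Lemma~\ref{lem:wigner-char-relation} (the symplectic Fourier transform is a bijection on $\Schwartz(\Real^{2\dm})$); $(1)\Leftrightarrow(8)$ is Lemma~\ref{lem:wigner-kernel-schwartz}; $(8)\Leftrightarrow(9)$ is Lemma~\ref{lem:jointly-schwartz}; $(1)\Rightarrow(3)$ uses the convolution formula of Lemma~\ref{lem:husimi-convolution}; $(4)\Rightarrow(6)$ is the Cauchy-Schwarz bound of Lemma~\ref{lem:husimi-bound-matel}; $(6)\Rightarrow(5)$ is Corollary~\ref{cor:matel-schwartz}; and $(5)\Rightarrow(1)$ is Corollary~\ref{cor:matel-wigner-schwartz}. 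The trivial inclusion $\Schwartz\subset\Decay$ closes $(1)\Rightarrow(2)$, $(3)\Rightarrow(4)$, and $(5)\Rightarrow(6)$, while $(6)\Rightarrow(4)$ follows from the identity $\Husimi{\A}(\alpha)=\Matel{\A}(\alpha,\alpha)$. For $(8)\Rightarrow(10)$, the kernel of $X^a P^b\A P^c X^d$ equals $(-i)^{|b|}i^{|c|}x^a y^d\partial_x^b\partial_y^c\Kernel{\A}(x,y)$ (after integrating by parts the $P$'s acting on the right), so if $\Kernel{\A}$ is Schwartz this kernel is in $L^2$, making the operator Hilbert-Schmidt and hence bounded, so $|\A|_{a,b,c,d}<\infty$.

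The direction $(10)\Rightarrow(6)$ is the only step requiring genuinely new work, and is the main obstacle: operator-norm boundedness of $X^a P^b\A P^c X^d$ is an $L^2$ statement, whereas pointwise rapid decay of $\Matel{\A}$ lives at individual phase-space points. The bridge is provided by the displacement identities $D_\alpha^\dagger X D_\alpha=X+\alpha_\rmx$ and $D_\alpha^\dagger P D_\alpha=P+\alpha_\rmp$, which let us convert the scalar factors $\alpha^c,\beta^d$ into operator factors sandwiching $\A$. Writing $R^k:=X^{k_\rmx}P^{k_\rmp}$ and $\widetilde{R}^k:=P^{k_\rmp}X^{k_\rmx}$, a short induction on $|c|$ yields
\begin{equation*}
\alpha^c\chi_\alpha \;=\; \sum_{k\le c}(-1)^{|k|}\binom{c}{k}\,\widetilde{R}^{c-k}(R^k\chi)_\alpha.
\end{equation*}
Applying this identity on both the $\alpha$ and $\beta$ sides of $\Matel{\A}(\alpha,\beta)=\braket{\chi_\alpha|\A|\chi_\beta}$, and using the self-adjointness of $X$ and $P$ (so $(\widetilde{R}^{c-k})^\dagger=R^{c-k}$), gives
\begin{equation*}
\alpha^c\beta^d\,\Matel{\A}(\alpha,\beta) \;=\; \sum_{k\le c,\,l\le d}(-1)^{|k|+|l|}\binom{c}{k}\binom{d}{l}\bigl\langle(R^k\chi)_\alpha\bigm|R^{c-k}\,\A\,\widetilde{R}^{d-l}\bigm|(R^l\chi)_\beta\bigr\rangle.
\end{equation*}
Each summand is bounded by $\|R^{c-k}\,\A\,\widetilde{R}^{d-l}\|_{\mathrm{op}}\|R^k\chi\|_{L^2}\|R^l\chi\|_{L^2}$; the operator norm is a Schwartz-operator seminorm from Definition~\ref{def:schwartz-operator} (finite by~(10)), and the $L^2$ norms are finite because $\chi\in\Schwartz(\Real^\dm)$. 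Summing the finitely many terms yields a bound uniform in $(\alpha,\beta)$, so $\Matel{\A}\in\Decay(\Real^{4\dm})$, closing the cycle.

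The additional remark about finite $\{\psi_i\}$ follows immediately from Definition~\ref{def:jointly-schwartz}: when only finitely many summands are present, $\sup_x\sum_i|x^a\partial_x^b\psi_i(x)|^2$ is finite for all $a,b$ if and only if each $|\psi_i|_{a,b}$ is finite, i.e., if and only if each $\psi_i$ is individually Schwartz. Combined with the equivalence of~(9) to the other conditions, this proves the final claim.
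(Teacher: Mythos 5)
Your proposal is correct, and it establishes a genuinely complete cycle through all ten conditions using the same backbone as the paper (Theorem~\ref{thm:main}, Corollaries~\ref{cor:rapid-decay}, \ref{cor:matel-schwartz}, \ref{cor:matel-wigner-schwartz}, and Lemmas~\ref{lem:wigner-char-relation}, \ref{lem:wigner-kernel-schwartz}, \ref{lem:jointly-schwartz}). The substantive difference is your treatment of the Schwartz-operator condition: the paper disposes of $\A\in\Schwartz(L^2(\Real^{\dm}))\Leftrightarrow\Wigner{\A}\in\Schwartz(\Real^{2\dm})$ by citing Proposition~3.18 of Keyl et al., whereas you prove $(8)\Rightarrow(10)$ via the kernel of $X^aP^b\A P^cX^d$ (Hilbert--Schmidt, hence bounded) and $(10)\Rightarrow(6)$ via the displacement identities $D_\alpha^\dagger R D_\alpha = R+\alpha$. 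I checked your combinatorial identity for $\alpha^c\chi_\alpha$ in low-order cases, including a mixed $X$--$P$ index where the commutator corrections cancel between the $\widetilde{R}^{c-k}\chi_\alpha$ and $(R^c\chi)_\alpha$ terms, and it holds; even if an ordering subtlety arose at higher order, the commutators are scalars and only generate lower-order terms of the same form, so the uniform bound on $\alpha^c\beta^d\Matel{\A}(\alpha,\beta)$ survives. This buys you a self-contained proof at the cost of one extra computation. A second, smaller difference: you explicitly supply the implications $(1)\Rightarrow(3)$ (via Lemma~\ref{lem:husimi-convolution}) and $(4)\Rightarrow(6)$ (via Lemma~\ref{lem:husimi-bound-matel}), which are needed to make the Husimi conditions genuinely equivalent to the rest; the paper's proof as written only records implications \emph{into} $\Husimi{\A}\in\Decay(\Real^{2\dm})$ and out of nothing into $\Husimi{\A}\in\Schwartz(\Real^{2\dm})$, so your version closes a loop the paper leaves implicit. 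Your handling of the finite-set addendum matches the paper's (unproved) assertion and is correct.
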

\begin{proof} We have:
        
    $\Wigner{\A}\in\Schwartz(\Real^{2\dm})\Rightarrow\Wigner{\A}\in\Decay(\Real^{2\dm})$ because $\Schwartz(\Real^{2\dm})\subset\Decay(\Real^{2\dm})$.

    $\Wigner{\A}\in\Decay(\Real^{2\dm}) \Rightarrow \Husimi{\A}, \Matel{\A}\in\Decay(\Real^{2\dm})$ by 
    Corollary~\ref{cor:rapid-decay}.
    
    $\Matel{\A}\in\Decay(\Real^{2\dm})\Rightarrow \Matel{\A}\in\Schwartz(\Real^{2\dm})$ by Corollary~\ref{cor:matel-schwartz}.
    
    $\Matel{\A}\in\Schwartz(\Real^{2\dm})\Rightarrow \Wigner{\A}\in\Schwartz(\Real^{2\dm})$ by Corollary~\ref{cor:matel-wigner-schwartz}.
    
    $\Husimi{\A}\in\Schwartz(\Real^{2\dm})\Rightarrow\Husimi{\A}\in\Decay(\Real^{2\dm})$ because $\Schwartz(\Real^{2\dm})\subset\Decay(\Real^{2\dm})$.
    
    $\Matel{\A}\in\Schwartz(\Real^{2\dm})\Rightarrow\Matel{\A}\in\Decay(\Real^{2\dm})$ because $\Schwartz(\Real^{2\dm})\subset\Decay(\Real^{2\dm})$.

    
    
    
    
    $\Wigner{\A}\in\Schwartz(\Real^{2\dm}) \Leftrightarrow\Char{\A}\in\Schwartz(\Real^{2\dm})$ by Lemma~\ref{lem:wigner-char-relation}.
    
    $\Wigner{\A}\in\Schwartz(\Real^{2\dm}) \Leftrightarrow\Kernel{\A}\in\Schwartz(\Real^{2\dm})$ by Lemma~\ref{lem:wigner-kernel-schwartz}.
    
    $\Kernel{\A}\in \Schwartz(\Real^{2\dm})\Leftrightarrow \{\psi_i\}\in \Schwartz_{\mathrm{j}}(\Real^{2\dm})$ by Lemma~\ref{lem:jointly-schwartz}.
    
    $\A \in \Schwartz(L^2(\Real^{\dm}))\Leftrightarrow \Wigner{\A}\in\Schwartz(\Real^{2\dm})$ by Proposition 3.18 in Ref.~\cite{keyl2016schwartz}. 
\end{proof}
We say a quantum state satisfying the above equivalent conditions is a \emph{Schwartz state}. 

Note that $\Kernel{\A}\in \Decay(\Real^{2\dm})$ is \emph{not} an equivalent condition, being strictly weaker than the other conditions above.\footnote{For instance, the plateau wavefunction $\psi(y) = \{1 \,\mathrm{if}\, 0\le y \le 1, 0\,\mathrm{otherwise}\}$ is compactly supported in position space but in momentum space decays to infinity only as a polynomial.}
This is essentially because $\Kernel{\A}$ is a spatial representation, so momentum information is encoded only in its derivatives, whereas $\Wigner{\A}$, $\Husimi{\A}$, and $\Matel{\A}$ are phase-space representations whose decay constrains both space and momentum features. Similarly, $\Char{\A}\in \Decay(\Real^{2\dm})$ is not an equivalent condition because rapid decay of the derivatives of $\Wigner{\A}$ does not assure that $\Wigner{\A}$ has rapid decay.\footnote{Consider the $\dm=1$ quantum state $\rho=\sum_{k=0}^\infty |\psi_k\rangle\!\langle\psi_k|$ with $\psi_k(y) = \psi_0(y-z_k) = (6/\pi^2)k^{-2} \exp[-(y-z_k)^2/2]/\sqrt{2\pi}$ with $z_k = k^3$.  This is a mixture of Gaussians of equal variance, so the derivatives are all rapidly decreasing and $\Char{\A}\in\Decay(\Real^{2\dm})$, but the mean $\trace[\rho X]$ diverges so $\Wigner{\A}$ is not a Schwartz function.}

\section{Discussion}
\label{sec:discussion}

Although the Wigner formalism provides a complete representation of quantum states and dynamics, it is often regarded as less fundamental.  (It only really becomes uniquely preferred in the classical limit, and under, e.g., certain symmetry demands to distinguish it from other deformations of classical mechanics; see for instance the introduction of Ref.~\cite{degosson2016bornjordan}.)  One practical reason is that computations are often more difficult using the Moyal product\footnote{Of course, the peculiar features of the Moyal product are not just a matter of practicalities: Because the Moyal bracket has phase-space derivatives of arbitrarily high order, the dynamics of the Wigner function are non-local.}, the so-called ``$\star$-genvalue equations'', and so on \cite{curtright2014concise}. Another reason is that the state space is awkward to define. 

Formally, the state space of valid Wigner functions can be delineated with the quantum generalization \cites{srinivas1975nonclassical,broecker1995mixed} of Bochner's theorem \cite{bochner1933monotone}.  (See also illuminating discussion and further generalizations to some discrete spaces in Ref.~\cite{dangniam2015quantum}.) This definition is sufficiently opaque that most physicists simply think of the allowed pure-state Wigner functions as the image of the Wigner transform of the space of allowed quantum states, $L^2(\Real^{\dm})$, if they think of it at all.  In particular, many simple (even positive-valued) functions on $L^1(\Real^{2\dm})$ are not the Wigner functions of any quantum states.

In contrast, the $L^2(\Real^{\dm})$ (pure) state space of the Schr\"odinger representation is relatively simple to understand, and the ``parameterization'' of that space is natural in the sense that all possible functions are allowed modulo only the single, easy-to-interpret constraint of normalization. This becomes even clearer in the case of a finite-dimensional quantum system, where there are no complications related to the continuum and where any complex-valued function over configuration space suffices as a (not necessarily normalized) state.  Delineating the corresponding set of Wigner functions for finite-dimensional systems is much more subtle \cite{dangniam2015quantum}.
	
In this sense, the Wigner representation is ``overparameterized''.  One can think of our Theorem~\ref{thm:main} as better characterizing this overparameterization: the regularity of the \emph{interior} of the Wigner function in terms of its derivatives of any order is tightly controlled by the Wigner function's decay toward infinity, a feature that is obviously not shared by all normalized functions over $\Real^{2\dm}$. With the seminorm bound of Theorem~\ref{thm:seminorm-bound}, one can also recover a version of the uncertainty principle.  For example, because the supremum of the gradient of $\Wigner{\rho}$ is bounded by the decay seminorms of $\Wigner{\rho}$, it is impossible for $\Wigner{\rho}$ to be supported in a ball of too small a radius.


\section{Acknowledgements}

CJR thanks Jukka Kiukas and Reinhard Werner for helpful discussion.  FH was supported by the Fannie and John Hertz Foundation Fellowship.



\appendix
\section{}

Here we here recall the proofs of some standard results referenced in the main body of this paper.

\newtheorem*{lemma:displacement}{Lemma \ref{lem:displacement}}
\begin{lemma:displacement}
\displacementlemma
\end{lemma:displacement}
\begin{proof}
First, let us prove the statement for some Schwartz function $\chi\in\Schwartz(\Real^{\dm})$. Consider the following differential equation:
\begin{equation}
\label{D-pde}
\partial_t f_t = i R \wedge\xi f_t
\end{equation}
Solutions $f_t$ preserve the $L^2$ norm (because the operator on the
right is anti-Hermitian), so solutions are unique.
Moreover, since $D_{t\xi}f_0$ is a solution to~\eqref{D-pde}, any
solution to~\eqref{D-pde} with initial condition $f_0=\chi$
satisfies $f_1=D_\xi\chi$.

It remains to check that the function
\begin{equation}
f_t(y) = e^{i(y-t\xi_\rmx/2)\cdot (t\xi_\rmp)} \chi(y-t\xi_\rmx)
\end{equation}
solves~\eqref{D-pde}.  We check this by a direct computation,
\begin{equation}
\begin{split}
%
\partial_t f_t(y) &= i((y-t\xi_\rmx)\cdot \xi_\rmp)f_t(y)
- e^{i(y-t\xi_\rmx/2)\cdot (t\xi_\rmp)} \xi_\rmx\cdot (\nabla \chi)(y-t\xi_\rmx) \\
&= i\xi_\rmp\cdot y f_t(y)
- i t\xi_\rmx\cdot \xi_\rmp f_t(y)
- e^{i(y-t\xi_\rmx/2)\cdot (t\xi_\rmp)} \xi_\rmx\cdot (\nabla \chi)(y-t\xi_\rmx) \\
&= i\xi_\rmp\cdot X f_t(y) - i\xi_\rmx\cdot P f_t(y).
\end{split}
\end{equation}
Having proven the claim for $\chi\in\Schwartz(\Real^{\dm})$, we can extend it to any $\phi\in L^2(\Real^{\dm})$ by using the density of the Schwartz function in $L^2(\Real^{\dm})$, i.e., by approximating $D_\xi \phi$ to accuracy $\epsilon$ with some choice of $D_\xi \chi^{(\epsilon)}\in\Schwartz(\Real^{\dm})$ and taking $\epsilon\to 0$.
\end{proof}

\newtheorem*{lemma:wigner-char-relation}{Lemma \ref{lem:wigner-char-relation}}
\begin{lemma:wigner-char-relation}
    For any trace-class kernel operator $E$, the corresponding Wigner transform and quasicharacteristic transform are symplectic Fourier duals:
    \begin{equation}
    \label{wigner-char-relation-appendix}
     \Wigner{E}(\alpha) = \frac{1}{(2\pi)^{2\dm}}\int e^{-i\alpha\wedge\xi}\Char{E}(\xi)\diff\xi.
    \end{equation}
\end{lemma:wigner-char-relation}
\begin{proof}
By linearity it suffices to check the case that $E$ is a rank-1 state
of the form $E=\ket{\psi}\!\!\bra{\phi}$.  Moreover, by the continuity
of the Wigner transform in $L^2(\Real^\dm)\times L^2(\Real^\dm)$ and the density of
Schwartz functions in $L^2(\Real^\dm)$, we may furthermore assume that
$\psi,\phi\in\Schwartz(\Real^\dm)$.
In this case,
\begin{equation}
\begin{split}
\trace[E D_\xi]
&= \braket{\phi|D_\xi|\psi} =
\braket{\phi|D_{\xi/2}D_{\xi/2}|\psi} =
\langle D_{-\xi/2}\phi, D_{\xi/2}\psi\rangle \\
&=\int e^{i(z+\xi_\rmx/4)\cdot \xi_\rmp/2}
e^{i(z-\xi_\rmx/4)\cdot \xi_\rmp/2}
\bar{\phi}(z+\xi_\rmx/2) \psi(z-\xi_\rmx/2)\diff z \\
&= \int e^{iz\cdot \xi_\rmp} \bar{\phi}(z+\xi_\rmx/2)\psi(z-\xi_\rmx/2)\diff z
\end{split}
\end{equation}
where to get the second line we use Lemma~\ref{lem:displacement}.
Applying this identity into the right-hand side of~\eqref{wigner-char-relation-appendix},
we get
\begin{equation}
\begin{split}
\frac{1}{(2\pi)^{2\dm}} \int e^{-i\alpha\wedge\xi}\Char{E}(\xi)\diff\xi &=
\frac{1}{(2\pi)^{2\dm}}\int e^{-i\alpha_\rmx \cdot\xi_\rmp+i\alpha_\rmp \cdot\xi_\rmx} e^{iz\cdot \xi_\rmp}
\bar{\phi}(z+\xi_\rmx/2)\psi(z-\xi_\rmx/2)\diff z \diff \xi_\rmx \diff \xi_\rmp \\
&= \frac{1}{(2\pi)^\dm}\int e^{i \alpha_\rmp\cdot \xi_\rmx}
\bar{\phi}(\alpha_\rmx+\xi_\rmx/2)\psi(\alpha_\rmx-\xi_\rmx/2)\diff \xi_\rmx,
\end{split}
\end{equation}
where the last line follows from the Fourier inversion
formula $\int e^{i(z-\alpha_\rmx)\cdot \xi_\rmp}
f(\alpha_\rmx)\diff z \diff \xi_\rmp = (2\pi)^{\dm} f(z)$. 
This is the definition of the Wigner transform $\Wigner{E}(\alpha)$ of $E=\ket{\psi}\!\!\bra{\phi}$.
\end{proof}

\newtheorem*{lemma:wigner-kernel-schwartz}{Lemma \ref{lem:wigner-kernel-schwartz}}
\begin{lemma:wigner-kernel-schwartz}
    \wignerKernelSchwartzLemma
\end{lemma:wigner-kernel-schwartz}
\begin{proof}
	First assume the kernel $\Kernel{\A}(x,y)$ is a Schwartz function.  Then the function $g(z,\Delta z) = \Kernel{\A}(z-\Delta z/2,z+\Delta z/2)$ is a Schwartz function since it is related to $\Kernel{\A}$ merely by a linear change in variables (a $45^\circ$ rotation).  And of course we have $\Wigner{\A}(x,p)=(2\pi)^{-\dm}\int e^{ip\cdot\Delta z}g(x,\Delta z)\diff \Delta z$, so $\Wigner{\A}$ must also be a Schwartz function since it is just the $\dm$-dimensional Fourier transform of $g$ (exchanging the variable $\Delta z$ for $p$ but leaving the variable $z$).  The argument works the same in the opposite direction, so we conclude that $\Kernel{\A}$ is a Schwartz function if and only if $\Wigner{\A}$ is a Schwartz function.
\end{proof}


\newtheorem*{lemma:twisted-convolution}{Lemma \ref{lem:twisted-convolution}}
\begin{lemma:twisted-convolution}
	\twistedConvolutionLemma
\end{lemma:twisted-convolution}
\begin{proof}
Let $F\in\Schwartz(\Real^{2\dm})$ and $G\in\Decay(\Real^{2\dm})$.  Then recall the 
definition of the twisted convolution,
\begin{equation}
    F\tconv_{\Omega'} G(\alpha) = \int e^{i\alpha\cdot\Omega'\cdot\alpha'/2}F(\alpha-\alpha') G(\alpha')\diff\alpha'.
\end{equation}
We claim that for any multi-index $a=(a_1,\dots,a_{2\dm})\in (\mathbb{N}\cup\{0\})^{\times 2\dm}$,
there exist constants $C(a,a')$ such that 
\begin{equation}
\label{grad-tconv-eq}
\partial^a (F\tconv_{\Omega'} G)(\alpha)
= \sum_{a'\leq a} C(a,a') \int e^{i \alpha\cdot \Omega'\cdot\alpha'/2}(\partial^{a'}F)(\alpha-\alpha') 
(i\Omega'\cdot \alpha')^{a-a'}G(\alpha')\diff\alpha'.
\end{equation}
Equation~\eqref{grad-tconv-eq} is easily checked by induction on $|a|$.   Now applying the 
triangle inequality we can estimate
\begin{equation}
    |\partial^a(F\tconv_{\Omega'} G)(\alpha)|
    \leq \sum_{a'\leq a}
    C(\alpha,\alpha')
    \int |\partial^{a'}F(\alpha-\alpha')|
    |(\Omega'\cdot \alpha')^{a-a'}G(\alpha')|\diff\alpha',
\end{equation}
which is the finite sum of convolutions of the rapidly decaying functions 
$|\partial_\alpha^{a'}F(\alpha)|$ and $|(\Omega'\cdot \alpha)^{a-a'}G(\alpha)|$.  Therefore $\partial^a (F\tconv G)$ is
rapidly decaying.  Since every partial derivative of $F\tconv G$ is rapidly decaying,
it follows that $F\tconv G\in\Schwartz(\Real^{2\dm})$.
\end{proof}
(Note that a similar statement for normal convolutions can be proven in almost exactly the same way.)

\newtheorem*{lemma:hilbert-schmidt-wigner}{Lemma \ref{lem:hilbert-schmidt-wigner}}
\begin{lemma:hilbert-schmidt-wigner}
    \hilbertschmidtwignerlemma
\end{lemma:hilbert-schmidt-wigner}
\begin{proof}
We first decompose $\A$ and $\B$ using the spectral theorem,
\begin{equation}
    \A = \sum_j \lambda_j \ket{\psi_j}\!\!\bra{\psi_j}, \qquad \qquad 
    \B = \sum_j \mu_j \ket{\phi_j}\!\!\bra{\phi_j},
\end{equation}
with $\lambda_j,\mu_j\geq 0$ and $\sum_j \lambda_j = \sum_j \mu_j=1$.
Then $\trace[\A\B] = \sum_{j,k} \lambda_j\mu_k |\!\braket{\psi_j|\phi_k}\!|^2$.  The 
result then follows from
\begin{equation}
\begin{split}
    \int \Wigner{\psi}(\alpha)\Wigner{\phi}(\alpha)\diff\alpha 
    &= \frac{1}{(2\pi)^{2\dm}} 
    \int e^{ip\cdot(y+z)}
    \Kernel{\A}(x-y/2,x+y/2)\\
    &\qquad\qquad\qquad \times
    \Kernel{\B}(x-z/2,x+z/2)
     \diff y \diff z\diff x \diff p\\
     &= \frac{1}{(2\pi)^{\dm}} \int
    \psi(x-y/2)\bar{\psi}(x+y/2)
    \phi(x+y/2)\bar{\phi}(x-y/2)
     \diff y  \diff x \\
     &= \frac{1}{(2\pi)^{\dm}} \int
    \psi(x_-)\bar{\psi}(x_+)
    \phi(x_+)\bar{\phi}(x_-)
     \diff x_+  \diff x_- \\
    &= \frac{1}{(2\pi)^{\dm}} |\braket{\psi|\phi}|^2,
\end{split}
\end{equation}
along with $\Wigner{\rho} = \sum_j \lambda_j \Wigner{\psi_j}$ and 
$\Wigner{\mu}=\sum_j \mu_j\Wigner{\phi_j}$.  The sums can be interchanged because everything converges
absolutely.
\end{proof}

\newtheorem*{lemma:trace}{Lemma \ref{lem:trace}}
\begin{lemma:trace}
    For any trace-class operator $E$ and any $\chi\in L^2(\Real^\dm)$ satisfying $\|\chi\|_{L^2(\Real^\dm)}=1$, 
    \begin{align}
    \trace[E]=\frac{1}{(2\pi)^\dm}\int \braket{\chi_\alpha|E|\chi_\alpha}\diff \alpha.
    \end{align}
In particular, for any $\phi,\psi\in L^2(\Real^\dm)$
    \begin{align}
    \label{inner-prod-formula}
    \braket{\phi|\psi}=\frac{1}{(2\pi)^\dm}\int \braket{\phi|\chi_\alpha}\braket{\chi_\alpha|\psi}\diff \alpha.
    \end{align}
\end{lemma:trace}
\begin{proof}
We start by proving~\eqref{inner-prod-formula}.  For $\psi,\phi\in L^2(\Real^\dm)$,
\begin{equation}
\label{fourier-inv}
\begin{split}
    \int \braket{\phi|\chi_\alpha}\braket{\chi_\alpha|g}\diff\alpha &= \int \Bigg[\left(\int \bar{\phi}(z) e^{i(z-\alpha_\rmx/2)\cdot \alpha_\rmp}
      \chi(z-\alpha_\rmx)\diff z\right)\\
    &\qquad\quad \times \left(\int\psi(z
      )\bar{\chi}(z'-\alpha_\rmx) e^{-i(z'-\alpha_\rmx/2)\cdot\alpha_\rmp}\diff z' \right)\Bigg]\diff \alpha \\
    &= \int \bar{\phi}(z)\psi(z') e^{i(z-z')\cdot \alpha_\rmp}
      \chi(z-\alpha_\rmx)\bar{\chi}(z'-\alpha_\rmx)\diff z \diff z'\diff \alpha_\rmx\diff \alpha_\rmp \\
    &= (2\pi)^\dm \int \bar{\phi}(z)\psi(z) \left(\int |\chi(z-\alpha_\rmx)|^2 \diff\alpha_\rmx \right)\diff z \\
    &= (2\pi)^\dm \braket{\phi|\psi}
\end{split}
\end{equation}
where to get from the second to the third line we use the Fourier inversion formula.

Then, if $E$ is any trace-class operator, we can write using the singular value
decomposition (which one can obtain from the spectral theorem applied to the 
polar decomposition $E=U\sqrt{E E^\dagger}$)
\begin{equation}
    E = \sum_j \sigma_j \ket{\phi_j}\!\!\bra{\psi_j}
\end{equation}
for some orthonormal bases $\phi_j$, $\psi_j$ on $L^2(\Real^\dm)$.  Here the singular
values $\sigma_j$ are nonnegative and satisfy $\sum_j \sigma_j = \|E\|_1<\infty$.
In this case we can expand the trace using this sum and apply~\eqref{fourier-inv} 
\begin{equation}
\begin{split}    
\trace[E] &= \sum_j \sigma_j \braket{\psi_j|\phi_j} \\
    &=  \sum_j \sigma_j \frac{1}{(2\pi)^{\dm}}\int \braket{\chi_\alpha|\phi_j}\braket{\psi_j|\chi_\alpha}\diff\alpha \\
    &= \frac{1}{(2\pi)^{\dm}} \int \braket{\chi_\alpha|E|\chi_\alpha}\diff\alpha.
\end{split}
\end{equation}
The last line is obtained by swapping the integral and the sum, which can be done
because the sum is absolutely convergent.
\end{proof}

\newtheorem*{lemma:husimi-convolution}{Lemma \ref{lem:husimi-convolution}}
\begin{lemma:husimi-convolution}
    \husimiconvolutionlemma
\end{lemma:husimi-convolution}
\begin{proof}
    We have
    \begin{align}
    \Husimi{\A}(\alpha) 
    &=\braket{\chi_\alpha|\rho|\chi_\alpha} 
    = \trace[\rho(\ket{\chi_\alpha}\!\!\bra{\chi_\alpha})] \\
    &= (2\pi)^{\dm} \int \Wigner{\A}(\beta) \Wigner{D_\alpha\ket{\chi}\!\bra{\chi}D_\alpha^\dagger}(\beta)\diff \beta\\
    &= (2\pi)^{\dm}  \int \Wigner{\A}(\beta) \Wigner{\chi}(\beta-\alpha)\diff \beta
    \end{align}
    where we get the second line from Lemma~\ref{lem:hilbert-schmidt-wigner} and the third line from the fact that the map $\eta\mapsto D_\alpha \eta D_\alpha^\dagger$ on quantum states corresponds to a displacement of the Wigner function by $\alpha$:
    \begin{align}
    \Wigner{D_\alpha \eta D_\alpha^\dagger}(\beta)
    &= \frac{1}{(2\pi)^{2\dm}}\int e^{-i\beta\wedge\xi}\Char{D_\alpha \eta D_\alpha^\dagger}(\xi)\diff \xi\\
    &= \frac{1}{(2\pi)^{2\dm}}\int e^{-i\beta\wedge\xi}\trace[D_\alpha \eta D_{-\alpha} D_\xi]\diff \xi\\
    &= \frac{1}{(2\pi)^{2\dm}}\int e^{-i(\beta-\alpha)\wedge\xi}\trace[\eta D_\xi]\diff \xi\\
    &=\Wigner{\eta}(\beta-\alpha)
    \end{align}
    Furthermore, since $\chi$ is a Schwartz function, so is $\Kernel{\chi}(x,y) = \bar\chi(x)\chi(y)$, and hence by Lemma~\ref{lem:wigner-kernel-schwartz} we have that $\WignerMinus{\chi}(\alpha) = \Wigner{\chi}(-\alpha)$ is a Schwartz function.
\end{proof}

\bibliographystyle{amsxport}
\bibliography{ws}

\end{document}